\documentclass[11pt]{article}

\usepackage[a4paper,margin=1in]{geometry}
\usepackage{setspace}
\usepackage{authblk}

\usepackage{amsmath, amssymb, amsfonts}
\usepackage{amsthm}
\usepackage{mathtools}
\usepackage{bbm}
\usepackage{tikz}
\usepackage{amsthm}       
\usepackage{hyperref}     

\theoremstyle{plain}
\newtheorem{theorem}{Theorem}[section]
\newtheorem{lemma}[theorem]{Lemma}

\theoremstyle{definition}

\theoremstyle{remark}
\newtheorem{remark}[theorem]{Remark}
\newtheorem{corollary}[theorem]{Corollary}

\DeclareMathOperator{\Dom}{Dom}

\newcommand{\C}{\mathbb{C}}
\newcommand{\R}{\mathbb{R}}
\newcommand{\N}{\mathbb{N}}
\newcommand{\Z}{\mathbb{Z}}

\newcommand{\ket}[1]{\lvert #1 \rangle}
\newcommand{\bra}[1]{\langle #1 \rvert}

\usepackage{graphicx}
\usepackage{float}

\usepackage[numbers]{natbib}
\usepackage{hyperref}
\hypersetup{
    colorlinks=true,
    linkcolor=blue,
    citecolor=blue,
    urlcolor=blue
}

\usepackage{enumitem}
\setlist{nosep}
\usepackage{cleveref}

\title{Stark localization of interacting particles}
\author[1]{Wojciech De Roeck}
\author[2]{Amirali Hannani}
\author[3] {Alessio Lerose}
\author[4]{Nathan Vandenbosch}
 \affil[1,3,4]{Institute for Theoretical Physics, 
 KU Leuven, 3000 Leuven, Belgium}
\affil[2]{Section of Mathematics, University of Geneva, 1205 Genève, Switzerland}
\date{\today}

\begin{document}

\maketitle

\begin{abstract}
    We consider $N$ interacting quantum particles on a one-dimensional lattice, and  subjected to an external linear potential. For $N=1$, the corresponding Hamiltonian  is explicitly diagonalizable, with superexponentially localized eigenstates. This is called Stark localization.
    We prove that superexponential spectral localization persists for arbitrary $N$ and every interaction strength.
\end{abstract}

\section{Introduction}

The phenomenon of Anderson localization \cite{Anderson, abrahams1979scaling,gol1977pure, kunz1980spectre, Frohlich1983,  Aizenman1993} has drawn a lot of interest in the past decades, both from the physics and the mathematics community.
Whereas the original problem was concerned with the behavior of a single particle, the focus has shifted in recent years to few-body localization  \cite{Chulaevsky2011, Chulaevsky2014, Aizenman2009, Fauser2014}  and to many-body localization \cite{Basko2006,gornyi2005interacting,oganesyan2007localization}. Here, by the few-body setup we understand that the number of particles is fixed, independently of the volume, whereas the many-body setup refers to a \emph{positive density}, where the number of particles is proportional to the total volume. 

The present paper is concerned with \emph{Stark} few-body localization in dimension $1$.
Stark localization differs from Anderson localization in that the potential  $V(x), x\in \Z$ acting on individual particles takes the form of a deterministic linear gradient $V(x)=hx$, with $h\neq 0$, rather than $(V(x))_{x\in \Z}$ being a collection of random variables, as it is for Anderson localization. For a single particle, Stark localization is an exactly solvable problem: the Hamiltonian can be explicitly diagonalized, and the spatial localization of eigenfunctions directly follows from energy conservation~\cite{wannier_dynamics_1962}.  That is no longer true for the problem with which we are concerned here: interacting particles in a Stark potential. 


The possibility of a Stark few-body and many-body localization phenomenon was proposed in~\cite{van_nieuwenburg_bloch_2019,schulz_stark_2019} and  elucidated in~\cite{taylor_experimental_2020,kohlert_exploring_2023,krajewski_dependence_2025,bertoni_local_2024,doggen_stark_2021,zisling_transport_2022, SunWang2025Stark},  along with experimental observations of signatures of this phenomenon~\cite{morong_observation_2021,wang_exploring_2025,guo_stark_2021}.
Already in early investigations it was argued that the resonances associated with a perfectly linear potential may enable a breakdown of localization visible on very long time-scales, and it was proposed that a weak non-linear curvature~\cite{schulz_stark_2019} or weak disorder~\cite{van_nieuwenburg_bloch_2019} restore many-body localization.

The central theoretical question we address in this work is: in what sense and under which conditions do Stark few-body phenomena exist?  
We provide a partial answer to this question:
We prove that an arbitrary but finite number of particles in infinite volume are spectrally localized, i.e.\ the spectrum is pure point. 
\\
We emphasize that our result differs from the existing mathematical literature both technically and conceptually. Rigorous mathematical results about Stark localization mostly study the spectral properties of the one body problem perturbed by certain potentials, either in continuum  
\cite{herbst1981stark,frank2022spectrum,stolz1993note,hislop1990stark,korotyaev2020eigenvalue,korotyaev2017resonances,sacchetti2014existence,grecchi1994stark,bentosela1991stark,combes1991stark},
or on the lattice, \cite{sun2025stark,aloisio2025dynamical,longhi2023absence}. Our result, however, is concerned with the \textit{few body} setup. 


\section{Result }

The one particle Stark Hamiltonian describes a 
particle hopping on a one-dimensional lattice and subjected to a linear field. It is given by  
$$
H_0=g\Delta -2 h X,
$$
acting on a dense subspace of $\ell^2(\Z)$, where
\begin{enumerate}
    \item $X|x\rangle=x|x\rangle$ with $|x\rangle$ the canonical basis vectors in $\ell^2(\Z)$. \item $\Delta$ is the lattice Laplacian. 
    \item $g,h$ are real parameters. 
\end{enumerate}
By the Kato-Rellich theorem,  $H_0$ is self-adjoint on the domain of $X$. It is well-known that, for any nonzero value of $h$, it has pure point spectrum, and the eigenvectors are strongly localized in space (they decay faster than exponential away from a localization center).    This phenomenon is called ``Stark localization".  

We consider the case of  $N$ particles. Define on (a dense subspace of) the $N$-particle space $\bigotimes_{j=1}^N \ell^2(\Z) = \ell^2 (\Z^N)$, with inner product $\langle\cdot | \cdot\rangle$, the operator
$$
H_0^{(N)} = H_0 \otimes 1 \otimes \dots \otimes 1 + 1 \otimes H_0 \otimes\dots \otimes 1 + \dots + 1 \otimes 1 \otimes \dots \otimes H_0.
$$
To get the full Hamiltonian, we add a pair interaction. 
$$
H^{(N)} =H^{(N)}_0 +   \sum_{1\leq i<j\leq N}V({x_i-x_j})
$$
where $V({x_i-x_j})$ acts by multiplication with the function $v({x_i-x_j})$  in the position basis $|x_i,x_j\rangle$.  We assume that the pair potential $v:\Z\to\R$ is bounded and decays at infinity.

From a physics perspective, it is more natural to consider fermionic particles. This is possible and the results for fermions (or bosons) follow straightforwardly from our results for distinguishable particles, see subsection \ref{sec: fermions}.

As a general (non-rigorous) rule in many-body physics, interaction tends to destroy localization phenomena.  
However, in the above setup we find that this is not the case. At least the spectral localization is \emph{not} affected by the interaction.

\begin{theorem}[Spectral Localization]\label{thm: spectral localization}
 If $h\neq 0$,   the operator $H^{(N)}$ has pure point spectrum. 
\end{theorem}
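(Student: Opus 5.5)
The plan is to exploit the exact translation covariance of $H^{(N)}$ in order to reduce the statement to a Floquet problem on the space of relative coordinates, and then to prove pure point spectrum for the resulting monodromy operator by induction on $N$.

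\emph{Step 1: Bloch--Floquet reduction.} Let $T$ be the simultaneous shift $T|x_1,\dots,x_N\rangle=|x_1+1,\dots,x_N+1\rangle$. The hopping term and the pair interaction commute with $T$, while the field term $-2hS$, with $S=\sum_i X_i$, satisfies $T^*ST=S+N$. I will use coordinates $x_1$ and $y=(x_2-x_1,\dots,x_N-x_1)\in\Z^{N-1}$, and Fourier transform in the centre variable with quasi-momentum $k\in[0,2\pi)$. This gives a unitary equivalence
\[
H^{(N)}\;\cong\; 2hN\, i\partial_k + A(k)\quad\text{on } L^2\bigl([0,2\pi);\ell^2(\Z^{N-1})\bigr),
\]
where $A(k)=K(k)+\sum_{i<j}V(y_i-y_j)-2hR$. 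Here $K(k)$ is the fibred hopping operator, which is bounded and $2\pi$-periodic up to a fixed gauge, and $R=\sum_i y_i$. An operator of the form $ic\,\partial_k+A(k)$ with periodic $A$ is a Floquet Hamiltonian in the time variable $k$. Its spectrum is pure point if and only if the monodromy $M=\mathcal{U}(2\pi,0)$ of the equation $ic\,\partial_k\psi=-A(k)\psi$ has pure point spectrum on $\ell^2(\Z^{N-1})$. This equivalence is standard; the unbounded part $-2hR$ is handled by first passing to the interaction picture with respect to it. In physical terms, $M$ is the evolution of $H^{(N)}$ over one Bloch period $\pi/h$, composed with a centre-of-mass translation. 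For $V=0$ every particle returns exactly after one Bloch period, so $M_0$ is a fixed phase times a relative translation, and $M_0$ is explicit.

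\emph{Step 2: dynamical localisation over one period.} Using the superexponential decay of the one-particle Stark propagator (Wannier--Stark), I will show that $M$ has superexponentially decaying matrix elements in the $y$-basis. This holds uniformly, because the interaction is bounded and so does not spoil the ballistic bound over a fixed finite time. Consequently, each $y_i$ is displaced by at most $O(1)$ per period, up to superexponential tails.

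\emph{Step 3: induction on $N$ via cluster decomposition.} The case $N=1$ is trivial. For the inductive step, suppose $\psi$ lies in the continuous subspace of $M$. By the RAGE theorem, the Cesàro time averages of $\|\chi_{|y|\le L}M^n\psi\|^2$ tend to zero for every $L$, so $\psi$ escapes to infinity in relative coordinates. Escaping means that the particles separate into at least two clusters at mutual distance $\to\infty$. For separated clusters, Step 2 together with the decay of $v$ shows that $M$ is asymptotically a tensor product of the monodromies of the clusters, and each cluster has fewer than $N$ particles. By the induction hypothesis each cluster monodromy has pure point spectrum, and moreover a cluster's centre cannot drift because of the Bloch-period structure. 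The intended conclusion is that in the cluster channels the intercluster distance changes only by the per-period relative Stark drift, which is a bounded quasi-periodic motion. In particular, the distance between clusters is asymptotically conserved modulo bounded oscillations, contradicting the assumed escape to infinity.

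\emph{Main obstacle.} The hard step is making the last argument rigorous. I would first try a Mourre- or Enss-type argument on the intercluster distance $D$: show that $M^{-1}DM-D$ is bounded and, in the cluster region, converges to zero after averaging over cluster eigenfunctions. That would give $\sup_n\|DM^n\psi\|<\infty$ for a dense set of $\psi$ in each channel, which contradicts RAGE. The difficulty is that the convergence of $M$ to a product of cluster monodromies is only asymptotic and not summable in general, because $v$ is merely assumed to decay. To control the resulting error, I expect to need the superexponential tails from Step 2 together with a careful localisation in energy in each channel.
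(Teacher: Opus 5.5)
\textbf{There is a genuine gap, in Step 3.}

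Your Steps 1 and 2 are essentially sound. The covariance $T^*H^{(N)}T=H^{(N)}-2hN$ does produce a Floquet operator in the variable $k$. The standard Floquet correspondence then reduces pure point spectrum of $H^{(N)}$ to pure point spectrum of the monodromy $M$, and finite-time propagation bounds hold. But this only repackages the problem. The entire content of the theorem sits in Step 3, which you leave as a heuristic, and that heuristic points the wrong way.

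\textbf{Why the heuristic fails: exact resonances.} The decoupled cluster dynamics is exactly resonant. Suppose two clusters of sizes $p,q$ sit at Stark positions $a,b$. The unperturbed energy depends only on $pa+qb$. Hence configurations related by $(a,b)\mapsto(a+q,b-p)$, whose separations differ by $p+q$, are exactly degenerate. This happens already for $V=0$: $|m_1,m_2\rangle$ and $|m_1+1,m_2-1\rangle$ have the same energy.

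Equivalently, since $\sigma(H_0^{(N)})=-2h\Z$ and quasi-energies are taken modulo $2hN$, the free monodromy $M_0$ has at most $N$ eigenvalues, each of infinite multiplicity. Incidentally, this means $M_0$ is not a phase times a relative translation, which would have absolutely continuous spectrum.

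Consequences:
\begin{itemize}
\item The intercluster coupling, however small at large separation, acts at first order inside an infinitely degenerate eigenspace.
\item Clusters can therefore exchange energy and drift apart or together secularly. In particular, a cluster's centre \emph{can} move.
\item The separation is not an approximate constant of motion. So no Mourre, Enss or non-propagation argument based on ``$M$ is asymptotically a product of cluster monodromies'' can control all times. The non-summability of $v$ that you flag is a secondary issue.
\end{itemize}

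Your target, $\sup_n\|DM^n\psi\|<\infty$ on a dense set, is also a dynamical statement stronger than the theorem. The paper obtains moment bounds only for eigenvectors whose eigenvalues lie outside $\sigma^{(N)}_{\mathrm{cluster}}$, and it explicitly leaves dynamical localization open.

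\textbf{The missing idea: compactness and countability, not smallness.} Restricted to a degenerate manifold, the resonant coupling has matrix elements that vanish as the separation grows (because $v\to 0$) and decay superexponentially off the diagonal (Bessel overlaps). It is therefore compact and has pure point spectrum, whatever the drift rates. The paper implements this non-perturbatively through the Weinberg--van Winter/Hunziker identity $G(z)=D(z)+I(z)G(z)$:
\begin{itemize}
\item $I(z)$, the sum over connected graphs, is compact for $z\notin\sigma^{(N)}_{\mathrm{cluster}}$.
\item This holds because each chain $G_0V_{\alpha_1}\cdots G_0V_{\alpha_n}$ is a norm limit of finite-rank operators. Along a connected graph the cut-offs $V^R_{\alpha}$ confine all relative Stark coordinates, while $P_RG_0$ confines $\sum_i m_i$.
\item The analytic Fredholm theorem then gives $\sigma(H^{(N)})\subseteq\sigma^{(N)}_{\mathrm{cluster}}\cup D_0$, with $D_0$ discrete away from $\sigma^{(N)}_{\mathrm{cluster}}$.
\end{itemize}
Pure point spectrum then follows from \emph{countability} of $\sigma(H^{(N)})$, not from any propagation bound.

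\textbf{Your inductive hypothesis is too weak.} The induction must carry countability of $\sigma(H^{(j)})$; closedness of the Minkowski sums then comes from $2hj$-periodicity. Pure point spectrum of the clusters alone is not enough: it could be dense, the thresholds could fill an interval, and nothing could be concluded off them. If you want to stay in the Floquet picture, you would need an analogous Fredholm argument showing that $\sigma(M)$ is countable.
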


Let us comment on what this implies for the dynamical behaviour. 
One might consider the local particle density $\rho(x,t)$ corresponding to a normalized initial state $\psi_0 \in \ell^2(\mathbb{Z}^N) $;
$$
\rho(x,t)= \sum_{i=1}^N \sum_{\underline{x}\in \Z^N:}  \delta_{x_i,x}|\psi_t(\underline{x})|^2, \qquad \psi_t= e^{-it H^{(N)}}\psi_0
$$
normalized to $\sum_x \rho(x,t)=N$, where $\delta_{x_i,x}$ is the Kronecker delta. By expanding $\psi_t$ in the basis of eigenvectors, our result immediately implies  
\begin{corollary}
Let us fix an arbitrary $N$ and an initial normalized state $\psi_0$. Then
$$
\lim_{r\to\infty}\sup_{t\in \R} \sum_{x: |x| > r}  \rho(x,t) =0.
$$
\end{corollary}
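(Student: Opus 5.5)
The plan is to deduce the Corollary from Theorem~\ref{thm: spectral localization} by expanding $\psi_0$ in an orthonormal eigenbasis of $H^{(N)}$ and truncating to finitely many eigenvectors. The bound must hold for every $t$ simultaneously, and the point is that the truncation error does not grow in time.

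\textbf{Step 1: eigenbasis and truncation.} Since $H^{(N)}$ is self-adjoint with pure point spectrum, $\ell^2(\Z^N)$ has an orthonormal basis $(\phi_k)_{k\ge1}$ with $H^{(N)}\phi_k=E_k\phi_k$. Write $\psi_0=\sum_k c_k\phi_k$ with $\sum_k|c_k|^2=1$. Fix $\epsilon>0$ and choose $K$ so that $\sum_{k>K}|c_k|^2<\epsilon^2$. Split
\[
\psi_t=\chi_t+\eta_t,\qquad \chi_t=\sum_{k\le K}c_ke^{-itE_k}\phi_k,\qquad \eta_t=\sum_{k>K}c_ke^{-itE_k}\phi_k .
\]
By unitarity, $\|\eta_t\|<\epsilon$ for all $t$.

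\textbf{Step 2: rewrite the tail as an expectation.} Let $P_r$ be multiplication by the indicator of $A_r=\{\underline{x}\in\Z^N:\max_i|x_i|>r\}$. Then
\[
\sum_{|x|>r}\rho(x,t)=\sum_{\underline x}\#\{i:|x_i|>r\}\,|\psi_t(\underline x)|^2\le N\|P_r\psi_t\|^2 .
\]
Using $\|P_r\psi_t\|\le\|P_r\chi_t\|+\|\eta_t\|$ and the triangle inequality over the finite sum,
\[
\|P_r\chi_t\|\le\sum_{k\le K}|c_k|\,\|P_r\phi_k\|,
\]
which is independent of $t$, because the phases $e^{-itE_k}$ have modulus one.

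\textbf{Step 3: uniform limit.} Each $\phi_k$ is in $\ell^2(\Z^N)$, so $\|P_r\phi_k\|\to0$ as $r\to\infty$ by dominated convergence, as $A_r$ decreases to the empty set. With only finitely many $k\le K$, we can choose $r_0$ such that for $r\ge r_0$,
\[
\sup_t\|P_r\psi_t\|\le 2\epsilon,
\qquad\text{hence}\qquad
\sup_t\sum_{|x|>r}\rho(x,t)\le 4N\epsilon^2 .
\]
Since $\epsilon$ is arbitrary, the limit is zero.

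The only substantive input is pure point spectrum, that is, the existence of an eigenbasis, supplied by Theorem~\ref{thm: spectral localization}. The step needing care is Step 1, namely that the truncation error is uniform in $t$. This holds because $e^{-itH^{(N)}}$ acts diagonally with unimodular phases, so no rate of localization of the $\phi_k$ is required.
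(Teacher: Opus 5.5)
Your proof is correct and follows the paper's route. The paper only remarks that the corollary follows by expanding $\psi_t$ in the eigenbasis given by Theorem~\ref{thm: spectral localization}. Your truncation argument, with the finitely many eigenvectors giving a $t$-independent tail bound and the remainder controlled uniformly in $t$ by unitarity, is the standard way to make that one-line claim rigorous.
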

That is, the wavefunction does not escape to infinity. \\

This result does however not settle the physically relevant question of \emph{dynamical localization}. We leave this for future work and we refer to \cite{PhysRevLett.75.117,del1996operators} for an insightful discussion of the possibility having spectral localization without dynamical localization. 
In the physics literature, quite some recent discussion on Stark localization was focused on transport properties, hence on dynamical localization~\cite{sala_ergodicity_2020,khemani_localization_2020,burin_exact_2022,nandy_emergent_2024,ribeiro_many-body_2020}.
Since we have not established dynamical localization, we cannot 
rule out the very slow transport of particles that is expected in some of the above works. \\

We can quantify the localization properties by exhibiting the exponential decay of some eigenvectors.
For every $N$, let $\mathcal P_N$ be the set of partitions $P=\{p_1,\ldots,p_{|P|}\}$ of $N$, i.e.\ a collection of $|P|$ nonzero naturals such that $\sum_{i=1}^{|P|}p_i=N$. 
We define then
$$
\sigma^{(N)}_{\mathrm{cluster}} =\bigcup_{P \in \mathcal P_N}\left(\sigma(H^{(p_1)})+\ldots +  \sigma(H^{(p_{|P|})})\right)
$$
where $\sigma(A)$ is the spectrum of an operator $A$, by $H^{(p_i)}$ we mean the Hamiltonian with $p_i$ particles, and the sums are Minkowski sums. 
\begin{theorem}[Superexponential localization]\label{thm: superexp loc}
    Let $N \geq 1$, and $\theta \geq 0$. Assume $\lambda$ is an eigenvalue of $H^{(N)}$ such that $\lambda \notin \sigma^{(N)}_{\mathrm{cluster}}$.  
 Then, $\lambda$ has finite multiplicity and  there exists a constant $C(N,\lambda, \theta)$  such that, for any corresponding normalized eigenvector $\psi_{\lambda}$,
    $$
    |\langle \psi_{\lambda}\mid x_1,x_2,\dots ,x_N\rangle | <C(N,\theta,\lambda)  e^{-\theta (|x_1| + |x_2| + \dots + |x_N|)}.
    $$
\end{theorem}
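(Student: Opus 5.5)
The plan is to combine a center-of-mass reduction with an Agmon/Combes--Thomas argument in the relative coordinates, using the HVZ-type structure encoded in $\sigma^{(N)}_{\mathrm{cluster}}$.

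\textbf{Step 1: center-of-mass reduction.} Write $S=\sum_i x_i$. The total field term $-2hS$ commutes with the interaction. Translating all particles by one site, $T:x_i\mapsto x_i+1$, gives $T^*H^{(N)}T=H^{(N)}-2hN$. Hence the spectrum is invariant under shifts by $2hN$, and eigenvectors come in translated families. In the Fourier variable $k$ dual to the overall translation, $H^{(N)}$ is unitarily equivalent to $-2h\,\cdot$(a derivative in $k$) plus a fibered operator $K(k)$ acting on relative coordinates. The derivative can be gauged away, as in the Wannier--Stark solution of $N=1$. So $H^{(N)}$ is unitarily equivalent to a direct integral over a phase in $[0,2hN)$, with fiber $\tilde K$ acting on $\ell^2$ of relative coordinates; the spectrum is $\sigma(\tilde K)+2hN\Z$.

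\textbf{Step 2: decay in relative coordinates.} For an eigenvalue $\lambda\notin\sigma^{(N)}_{\mathrm{cluster}}$, I would prove a discrete HVZ statement. Consider configurations where the relative coordinates are large, meaning the particles split into well separated clusters. There $v$ is small between clusters, so $\tilde K$ is close to a sum of cluster Hamiltonians $H^{(p_1)}+\dots+H^{(p_{|P|})}$. This gives essential spectrum of $\tilde K$ contained in $\sigma^{(N)}_{\mathrm{cluster}}$, modulo the $2hN\Z$ bookkeeping. Via a Ruelle--Simon partition of unity, $\lambda$ then lies in the discrete spectrum with finite multiplicity. Exponential decay in the relative coordinates at \emph{any} rate $\theta$ follows from a Combes--Thomas estimate: conjugate by $e^{\theta\langle\cdot\rangle}$ on the cluster-separation region and use that $\lambda$ is at positive distance from each cluster spectrum.

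The main obstacle is that the hopping term $g\Delta$ conjugated by $e^{\theta|x|}$ produces an $O(e^\theta)$ perturbation, not a small one. So Combes--Thomas alone gives only some fixed decay rate, not all $\theta$. To get superexponential decay, I would exploit the linear field inside each cluster configuration: in the region where two clusters are separated by $d$, moving a particle changes the Stark energy by $2h$ per site. Iterating the resolvent expansion (a Wannier-type argument) then gives amplitudes bounded by products $\prod_j g/(2h j)$, which is superexponential. Making this compatible with the cluster gap will need an induction on $N$, using Theorem \ref{thm: superexp loc} for smaller clusters. I expect the hard case to be when a cluster's internal energy resonates.

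\textbf{Step 3: decay in the center of mass.} The eigenvector has the form $\psi(\underline x)=\sum_{m}c_m\,\phi(\underline x - m\mathbf 1)$-type data from the fiber decomposition. The energy constraint forces the center-of-mass amplitude to be the Fourier transform of a smooth (entire) function of $k$, exactly as in the one-body Bessel-function solution. This yields superexponential decay in $S$. Combined with Step 2, decay in $|x_1|+\dots+|x_N|$ at every rate $\theta$ follows, with the constant depending on $N,\lambda,\theta$ through the finitely many eigenvectors.
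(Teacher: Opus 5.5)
\textbf{There is a genuine gap, and it sits in Step 2, which carries the whole difficulty.} You correctly see that Combes--Thomas in the position basis gives only one fixed rate, because conjugating $g\Delta$ by $e^{\theta|x|}$ costs $O(ge^{\theta})$. Your proposed cure is a Wannier-type resolvent iteration in which each hop costs Stark energy $2h$. It does not act in the relative directions. Moving two particles (or clusters) apart in opposite directions leaves $\sum_i x_i$, and hence the Stark energy, unchanged. So no growing denominators $2hj$ appear there, and you get no factors $\prod_j g/(2hj)$; without a smallness condition on $g$ the expansion need not even converge. Decay in the relative directions has a different source: the inter-cluster interaction vanishes while $\lambda$ keeps a gap to the cluster spectrum. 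Using this at every rate $\theta$ requires weighted bounds on the cluster resolvents $(\lambda-H_D)^{-1}$.

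Your induction, which applies the theorem itself to smaller clusters, cannot supply these bounds. The resolvent $(\lambda-H_D)^{-1}$ involves the full spectral decomposition of the subcluster Hamiltonians $H^{(p)}$. That includes eigenvalues in $\sigma^{(p)}_{\mathrm{cluster}}$, where the theorem says nothing, and the constants $C(p,\mu,\theta)$ are not uniform in the eigenvalue $\mu$. This is exactly the resonant case you leave open.

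Step 1 does not help either. For $N\ge 2$ the fiber $K(k)$ is operator-valued with non-commuting values, so the $k$-derivative can only be removed at the cost of a monodromy (Floquet) operator. The resulting $\tilde K$ is not an explicit local operator in relative coordinates, so there is nothing concrete for a Ruelle--Simon partition argument to act on.

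\textbf{The missing idea is to work in the Wannier--Stark basis $\ket{\underline m}$, where $H_0^{(N)}$ is diagonal.} Then $H_0^{(N)}$ commutes with every weight $W_{i,\theta}=e^{\theta|m_i|}$. The matrix elements of $V$ decay superexponentially in each $|n_i-m_i|$ by the Bessel bounds, so $W_{i,\theta}^{-1}VW_{i,\theta}$ is bounded and analytic in $\theta$ for every $\theta$, and the hopping obstruction disappears.

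The paper then proves by induction on $N$ a resolvent-level statement: $W_{i,\theta}^{-1}G(z)W_{i,\theta}$ is bounded for all $z\notin\sigma(H^{(N)})$. This is stronger than eigenvector decay, and it is what lets the induction close. Its three ingredients are:
\begin{enumerate}
\item the cluster resolvents are handled by the induction hypothesis;
\item the connected part $W_{i,\theta}^{-1}I(z)W_{i,\theta}$ of the functional equation $G=D+IG$ is compact with the same spectrum as $I(z)$;
\item possible extra poles are excluded by analyticity in $\theta$: the negative Laurent coefficients vanish for small $\theta$, hence for all $\theta$.
\end{enumerate}
A Riesz projection around $\lambda$ then gives $\operatorname{Ran}P_\lambda\subset D(W_{i,\theta})$. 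Finite multiplicity follows from $P_\lambda=I(\lambda)P_\lambda$ with $I(\lambda)$ compact.

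Your Step 3 (center-of-mass decay via analyticity) is plausible. It is the easy direction, though, and becomes unnecessary once you have weights in each $|m_i|$.
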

Three remarks are in order:\\

\noindent\textbf{Remark 1.} We expect that, as one varies $g,h$ and the potential $v$, only isolated values give rise to spectrum in $\sigma^{(N)}_{\mathrm{cluster}}$, and so, for a typical choice of parameters, the above theorem would in fact apply to \emph{all} eigenvectors. However, we have not been able to prove this.  \\

\noindent\textbf{Remark 2.} The spatial decay is with respect to the origin. One might find it more natural to find a localization center and express the spatial decay relative to the localization center. This can be trivially achieved since the constant $C(N, \lambda,\theta)$ depends on the eigenvalue $\lambda$. However, one should contrast the situation with the case of Anderson localization of particles in a disordered potential. In that case, and restricting to the one-particle problem,  one can prove an estimate of the form 
$$|\langle \psi_{\lambda}|x\rangle| \leq C (1+x_{\textrm{loc}}^2) e^{-\theta |x-x_{\textrm{loc}}|}, $$
where  $\theta>0$  is now not longer arbitrary, and  
with $C$ a random variable with finite expectation value. That is, the prefactor grows only mildly with the localization center.  At the time of writing, we cannot prove a bound of this type.  Stated informally, we cannot rule out the case where the eigenfunctions have an anisotropic decay away from their localization center: faster than exponential towards infinity, and slower than exponential towards the origin. \\

\noindent\textbf{Remark 3.}  
Although we cannot prove the result mentioned in the above remark, we are still able to prove a result in this direction. In corollary \ref{cor: sum decay}, we prove that for any $\theta>0$ there is $C(\theta)$ such that and any eigenvalue/eigenvector pair $\lambda, \psi_{\lambda}$ we have: 
$$|\langle \psi_{\lambda}\mid x_1,\dots x_n\rangle| \leq C(\theta) \exp(-\theta|\sum_i x_i -\lambda|).$$
This result is valid for all $\lambda$, and the constant is uniform in $\lambda$, unlike the above-mentioned theorem. However, it only provides decay in "one direction": the direction of the center of mass. 
\\


\subsection{Fermions and bosons}\label{sec: fermions}

Let $P^{{(N)}}_{\mathrm{sym}}$ be the orthogonal projector onto the fully (anti)symmetric subspace of $\ell^2(\Z^{N})$, i.e.\ the space consisting of $\psi \in \ell^2(\Z^{N})$ such that
\begin{equation}\label{eq: antisym}
     \psi(x_{\pi(1)},\ldots,x_{\pi(N)})=\eta^{\mathrm{sign}(\pi)} \psi(x_1,\ldots,x_N), \qquad \eta=\pm 1
\end{equation}
for every permutation $\pi$ of the index set $\{1,\ldots,N\}$, and with $\eta=1, -1$ corresponding to bosons or fermions, respectively. 
If the pair potential $v:\Z \to \R$ is chosen symmetric, i.e.\ $v(x)=v(-x)$, then we check by inspection that
$$
[P^{{(N)}}_{\mathrm{sym}},H^{(N)}]=0.
$$
Since $H^{(N)}$ can be viewed as a direct sum of operators that are unitarily equivalent to $P^{{(N)}}_{\mathrm{sym}} H^{(N)} P^{{(N)}}_{\mathrm{sym}}$, it follows that spectrum and spectral type of  $H^{(N)}$ and $P^{{(N)}}_{\mathrm{sym}} H^{(N)} P^{{(N)}}_{\mathrm{sym}}$ coincide, only the multiplicity differs. 

Therefore, Theorems \ref{thm: spectral localization} and Theorem \ref{thm: superexp loc} carry over without any changes to the case of the fermionic or bosonic Hamiltonian. That is, one can replace $H^{(N)}$ by  $P^{{(N)}}_{\mathrm{sym}} H^{(N)} P^{{(N)}}_{\mathrm{sym}}$.

\section{Notation and preliminaries}
To avoid confusion with the to-be-introduced basis of Stark eigenstates, we write $e_j$ for the canonical basis of $\ell^2(\Z) $, such that the operators $X$ and $\Delta$, introduced above, are defined as
$$
Xe_j =  je_j,  \qquad  \Delta e_j = - (e_{j-1} +e_{j+1}).
$$
As above, we write $H_0=-2hX+g\Delta$ for the one-particle Hamiltonian, self-adjoint on the domain of $X$. 

It is known that, for any $g \geq 0$, $ H_0 $ is diagonalizable \cite{Fukuyama1973}. Its eigenvalues are $\{-2h m \mid m \in \Z\}$ with corresponding orthonormal basis of eigenvectors $\xi^{(m)} \in \ell^{2}(\mathbb{Z})$ given by $\xi_j^{(m)} :=\langle e_j \mid \xi^{(m)} \rangle= \mathcal{J}_{m-j}(g/h)$. Here $\mathcal{J}_n$ denotes the $n$-th Bessel function. We will throughout this paper denote $\xi^{(m)}$ by $\ket{m}$, and when going to higher (distinguishable) particle number write $\ket{m_1,m_2, \dots, m_N}$ to mean $\xi^{(m_1)} \otimes \xi^{(m_2)} \otimes \dots \otimes \xi^{(m_N)}\in\bigotimes_{k = 1}^N \ell^2(\Z) = \ell^2(\Z^N)$. We use the shorthand notation $\underline{m} := (m_1,\dots, m_N)$ for any $\underline{m} \in \mathbb{Z}^N$. We also represent $\ket{m_1,\dots, m_N}$ by $\ket{\underline{m}}$ when it is convenient. \\
Let $v: \Z \to \R$ be a bounded function, such that $\lim_{n \to \pm\infty} v(n) = 0$. We define the two-body interaction
$$
V^{(2) } (e_{j_1} \otimes e_{j_2}) = v(j_1-j_2) (e_{j_1} \otimes e_{j_2}).
$$
In the $\ket{m_1,m_2}$ basis this is given by
$$
V^{(2)}_{n_1,n_2;m_1,m_2}:=\bra{n_1,n_2} V^{(2)} \ket{ m_1,m_2} = \sum_{j_1,j_2}v(j_1-j_2) \mathcal{J}_{n_1-j_1}\mathcal{J}_{m_1 -j_1} \mathcal{J}_{n_2-j_2}\mathcal{J}_{m_2 -j_2}
$$
where we have dropped the $g/h$ dependence in the Bessel functions, for readability.
 Fix $N \geq 1$ to denote the number of particles. We define  the operator
$$
H_0^{(N)} = H_0 \otimes 1 \otimes \dots \otimes 1 + 1 \otimes H_0 \otimes\dots \otimes 1 + \dots + 1 \otimes 1 \otimes \dots \otimes H_0, 
$$ on a proper dense subset of $\bigotimes_{j} \ell^2(\Z) = \ell^2 (\Z^N)$ such that its closure is self-adjoint. 
Hence $$H_0^{(N)}\ket{\underline{m}} = -2h(m_1+\dots+m_N) \ket{\underline{m}}.$$ 
Let $\Lambda^{(N)}$ denote the $\binom{N}{2}$ unordered pairs $\{(i,j) \in \{1,\dots N\}^2 \mid  i<j\} .$ For any $\alpha \in \Lambda^{(N)}$, denote with $V_{\alpha}\equiv V_\alpha^{(N)}$ the operator on $\ell^2(\Z^N)$ given by
$$
V_\alpha^{(N)} \ket{m_1, \dots m_i, \dots  ,m_j , \dots , m_N} = \sum_{n_i,n_j \in \Z^2} V_{n_i,n_j;m_i,m_j}^{(2)} \ket{ m_1,\dots n_i, \dots , n_j, \dots, m_N}.
$$
Let 
$$
V \equiv V^{(N)} := \sum_{\alpha \in \Lambda^{(N)}} V_{\alpha}^{(N)}
$$
and let 
\begin{equation} \label{eq: Ham}
H^{(N)} = H_0^{(N)} + V ^{(N)}= H_0^{(N)}+V.
\end{equation}

\subsection{Conventions}
We use the following conventions throughout the paper: 
\begin{enumerate}
    \item Generally, we denote the closure of any densely defined (possibly unbounded) operator with the same symbol as the operator. Occasionally, we use the over-bar notation $\over{\cdots}$ to highlight the distinction. 
    \item By abusing the notation, for $\alpha=(i,j)$, we sometimes represent $V^{(2)}_{n_i,n_j;m_i,m_j}$ by $\langle n_1,n_2\mid V_{\alpha} \mid m_1,m_2\rangle\equiv \langle n_1,n_2\mid V_{\alpha}^{(N)} \mid m_1,m_2\rangle$.
    \item By the symbol $n!$, in case $n \notin\mathbb{Z}$ we mean $\Gamma(n+1)$ whenever $\mathrm{Re}(n)>0$. 
    \item We drop the superscript $N$ when there is no confusion. We mostly keep superscript of $H^{(N)}$  due to inductive structure as already depicted in the definition of $\sigma_{\mathrm{cluster}}^{(N)}$.
\end{enumerate}

\subsection{Basic properties of Bessel functions}
Here we gather some basic properties of Bessel functions: 
\begin{lemma} \label{lem: simple bounds}
    Let $\mathcal{J}_n(x)$ denote the Bessel function. Then we have for any $x \in \mathbb{R}$: 
    \begin{enumerate}
        \item Simple upper bound: 
        \begin{equation} \label{eq: Bessel UB}
           |\mathcal{J}_n(x)| \leq \frac{1}{|n|!} \left(\frac{|x|}{2}\right)^{|n|}
       \end{equation}  \item Summability: 
        \begin{equation} \label{eq: Bessel Sum}
            \sum_{n \in \mathbb{Z}} |\mathcal{J}_n(x)|\leq C \equiv C(|x|) 
        \end{equation}
        \item Decay property:
        \begin{equation} \label{eq: Bessel decay 1}
            \sum_{j \in \mathbb{Z}} |\mathcal{J}_{n-j}(x) \mathcal{J}_{m-j}(x)| \leq C\frac{\exp(c|m-n|)}{|(m-n)/2|!},
        \end{equation}
        where $C,c\geq 0$ depend on $|x|$. 
        \item There exists a function $f$ such that $\lim_{n \to \pm \infty} f(n)=0$ and we have for any $m_1,m_2 \in \mathbb{Z}$: 
        \begin{equation} \label{eq: Bessel decay with V}
            \sum_{j_1,j_2 \in \mathbb{Z}} |v(j_1-j_2)||\mathcal{J}_{m_1-j_1}(x) \mathcal{J}_{m_2-j_2}(x)| = f(m_1-m_2)
        \end{equation}
    \end{enumerate}
    
\end{lemma}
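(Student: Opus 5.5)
Each of the four items follows from the generating function $e^{\frac{x}{2}(t-1/t)}=\sum_n \mathcal{J}_n(x)t^n$ and the symmetry $\mathcal{J}_{-n}(x)=(-1)^n\mathcal{J}_n(x)$, which reduces everything to $n\ge 0$.

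\textbf{Item 1.} I will use the integral representation $\mathcal{J}_n(x)=\frac{1}{2\pi}\int_0^{2\pi}e^{i(x\sin\phi-n\phi)}d\phi$, or equivalently the power series $\mathcal{J}_n(x)=\sum_{k\ge0}\frac{(-1)^k}{k!(n+k)!}(x/2)^{n+2k}$. The standard bound is $|\mathcal{J}_n(x)|\le \frac{(|x|/2)^n}{n!}$ for real $x$. To get it, I bound $\mathcal{J}_n$ through Poisson's integral
$$\mathcal{J}_n(x)=\frac{(x/2)^n}{\Gamma(n+1/2)\Gamma(1/2)}\int_{-1}^1(1-s^2)^{n-1/2}\cos(xs)\,ds,$$
using $|\cos|\le1$; the remaining integral equals $\Gamma(n+\tfrac12)\Gamma(\tfrac12)/n!$. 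This is the one step needing care, but it is classical.

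\textbf{Item 2.} Summability follows directly from item 1: $\sum_n |\mathcal{J}_n(x)|\le 2\sum_{n\ge0}(|x|/2)^n/n!\le 2e^{|x|/2}$.

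\textbf{Item 3.} Set $d=|m-n|$. For each $j$, one of $|n-j|$ and $|m-j|$ is at least $d/2$, since $|n-j|+|m-j|\ge d$. By item 1, $|\mathcal{J}_a\mathcal{J}_b|\le (|x|/2)^{|a|+|b|}/(|a|!\,|b|!)$. I will use the log-convexity of $\Gamma$ in the form $|a|!\,|b|!\ge \lceil\cdot\rceil$; more simply, I use the bound for the larger index $\ge d/2$ and bound the other factor by $\sup_k|\mathcal{J}_k|\le1$. This gives
$$\sum_j|\mathcal{J}_{n-j}\mathcal{J}_{m-j}|\le \sum_{j:|n-j|\ge d/2}|\mathcal{J}_{n-j}|+\sum_{j:|m-j|\ge d/2}|\mathcal{J}_{m-j}|\le 2\sum_{k\ge d/2}\frac{2(|x|/2)^k}{k!}.$$
The tail sum is at most $C(|x|/2)^{d/2}/\lceil d/2\rceil!$, because the ratio of consecutive terms is at most $|x|/2$ divided by $d/2$, so the tail is comparable to its first term times $e^{|x|/2}$. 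Since $\lceil d/2\rceil!\ge (d/2)!$ (as Gamma values, up to a constant for small $d$), this gives the claimed form with $e^{c d}$, $c=\max(0,\tfrac12\log(|x|/2))$.

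\textbf{Item 4.} Substitute $j_1=m_1-a$ and $j_2=m_2-b$. The sum becomes
$$F(m_1-m_2)=\sum_{a,b}|v(m_1-m_2-a+b)|\,|\mathcal{J}_a||\mathcal{J}_b|,$$
which depends only on $k=m_1-m_2$; so I define $f(k)$ as this sum. Let $w_{a,b}=|\mathcal{J}_a||\mathcal{J}_b|$, which is summable by item 2. Then $f(k)=\sum_{a,b}w_{a,b}|v(k-a+b)|$. Each term tends to $0$ as $k\to\pm\infty$ and is dominated by $\|v\|_\infty w_{a,b}$, so dominated convergence gives $f(k)\to0$. The only real obstacle in the whole lemma is the clean factorial bound in item 1; everything else is elementary summation.
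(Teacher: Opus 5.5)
Your proposal is correct. Items 1, 2 and 4 match the paper. The paper simply cites Watson for $|\mathcal{J}_n(x)|\le (|x|/2)^n/n!$, and your Poisson-integral computation is the classical derivation of that bound. Item 4 is the same dominated-convergence argument after shifting indices.

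Item 3 is where you take a genuinely different route:
\begin{itemize}
\item \textbf{The paper's route.} It keeps both Bessel factors and bounds each by item 1. It then applies $|a+b|!\,|a-b|!\ge |a|!\,|b|!$ with $a=(m+n-2j)/2$, $b=(m-n)/2$. This pulls out $1/|(m-n)/2|!$ together with $\max(1,|x|)^{|m-n|}$ and leaves a convergent sum over $j$. It uses nothing beyond item 1.
\item \textbf{Your route.} One of $|n-j|,|m-j|$ is at least $d/2$. You discard the other factor via $|\mathcal{J}_k(x)|\le 1$ and reduce to the tail $\sum_{k\ge \lceil d/2\rceil}(|x|/2)^k/k!\le e^{|x|/2}(|x|/2)^{\lceil d/2\rceil}/\lceil d/2\rceil!$. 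This is more elementary and yields an explicit $c=\max(0,\tfrac12\log(|x|/2))$.
\item \textbf{What it costs.} You need the extra uniform bound $|\mathcal{J}_k(x)|\le 1$ for real $x$. Your integral representation supplies it; alternatively $\sup_k (|x|/2)^{|k|}/|k|!\le e^{|x|/2}$ from item 1 would do.
\end{itemize}

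Three small presentation points:
\begin{itemize}
\item The ratio-of-consecutive-terms remark only gives geometric decay when $d>|x|$. The tail bound you state holds for all $d$ directly from $(k_0+l)!\ge k_0!\,l!$.
\item $\lceil d/2\rceil!\ge (d/2)!$ holds for every $d\ge 0$, so no constant is needed there. For even $d$ it is an equality; for odd $d$ it follows because $\Gamma$ is increasing on $[3/2,\infty)$.
\item The half-sentence about log-convexity of $\Gamma$ is a leftover and should be deleted.
\end{itemize}
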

\begin{proof}
    1) See \cite[Pg. 49, Equation (1)]{watson1944treatise} for $n \geq 0$. Since $\mathcal{J}_{-n}(x) = (-1)^n\mathcal{J}_n(x)$ for integers, this extends to $n\leq 0$ as well. \\
2) By \cref{eq: Bessel UB} we see that 
$$
\sum_{n\in \Z}|\mathcal{J}_n(x)|  \leq \sum_{n \in \Z} \frac{1}{|n|!} \left(\frac{|x|}{2}\right)^{|n|} = 2 \exp({|x|}/2) - 1
$$
by the definition of the exponential.\\
3) We use that for $a,b \in \mathbb{N} $ we have that 
${1}/{|a+b|! |a-b|!} \leq {1}/{|a|! |b|!}$. By simple inspection this can be extended to all $a,b \in \frac{1}{2}\Z$. In particular, in our case for $a = {(m+n-2j)}/{2}$ and $b = {(m-n)}/{2}$, by \cref{eq: Bessel UB} we get
\begin{align*}
\sum_{j \in \mathbb{Z}} |\mathcal{J}_{n-j}(x) \mathcal{J}_{m-j}(x)| &\leq \sum_{j \in \mathbb{Z}}\frac{|x|^{|n-j|}}{|n-j|!} \frac{|x|^{|m-j|}}{|m-j|!}  \\
&\leq \sum_{j \in \mathbb{Z}}\frac{1}{|{(m+n-2j)}/{2}|!} \frac{1}{| {(m-n})/{2}|!} {|x|}^{|m-j|+|n-j|}.
\end{align*}
Taking $c= \max(\ln(|x|),1)$ and following observations finishes the proof: 
first:
$$
|n-m| \leq |n-j| + |m-j| \leq |n-m| + |n+ m - 2j|
$$
and second, by definition of the $\exp$  function:
$$ \sum_{j\in \mathbb{Z}} \frac{|x|^{|m+n-2j|}}{|(m+n-2j)/2|!} \leq  C\exp(c|x|^2).$$

4) By shifting summation indexes, it is clear that left hand side of \eqref{eq: Bessel decay with V} is only dependent on $m_1-m_2$. Hence define
$$
            f(n) := \sum_{j_1,j_2 \in \mathbb{Z}} |v(j_1-j_2)| |\mathcal{J}_{-j_1}(x) \mathcal{J}_{n-j_2}(x)|.
$$
We now prove that $\lim_{n\to \pm \infty} f(n) = 0$. Indeed, we get that 
\begin{align*}
\lim_{n\to \pm \infty}\sum_{j_1,j_2 \in \mathbb{Z}} |v(j_1-j_2)||\mathcal{J}_{-j_1}(x) \mathcal{J}_{n-j_2}(x)| |&= \lim_{n\to \pm \infty}\sum_{j_1,j_2 \in \mathbb{Z}} |v(j_1-j_2-n)||\mathcal{J}_{-j_1}(x) \mathcal{J}_{-j_2}(x)| \\
&\leq \sum_{j_1,j_2 \in \mathbb{Z}} \lim_{n\to \pm \infty}|v(j_1-j_2-n)||\mathcal{J}_{-j_1}(x) \mathcal{J}_{-j_2}(x)|\\
&= 0
\end{align*}
by the dominated convergence theorem, since we know that $|v(j_1-j_2-n)||\mathcal{J}_{-j_1}(x) \mathcal{J}_{-j_2}(x)|$ is dominated by $ \| v\|_{\infty}  |\mathcal{J}_{-j_1}(x) \mathcal{J}_{-j_2}(x)|$ for all $n \in \Z$, which is a summable function by \cref{eq: Bessel Sum}.
\end{proof}
\begin{remark}
    Let us remark that the above Lemma in principle means that we have the following asymptotic decay for matrix elements of $V^{(2)}$: 
     $$
    |\bra{n_1,n_2} V^{(2)} \ket{ m_1,m_2}| \lesssim f((m_1 - m_2) + (n_1-n_2))  \frac{\exp({c(|m_1 -n_1|+|n_2-m_2|))}}{|{|n_1-m_1|}/{2}|!\,|{|n_2-m_2|}/{2}|!},
    $$
    with $C,c \geq 0$ are constants depending on $g/h$ and $f$ decays to zero at $\pm \infty $. More precisely, division of $|m_i-n_i|$ by  two can be omitted as well. We dropped this extra precision, since it does not bring any new insight and just makes the proofs longer.
\end{remark}

\section{$H^{(N)}$ has pure point spectrum} \label{sec: pp spectrum}
 The goal of this section is to show that $H^{(N)}$ has pure point spectrum. We will do that using techniques developed in \cite{HunzikerWalter1966Otso}. The main tool that we use is equation \eqref{functional_equation} which also appeared in \cite{HunzikerWalter1966Otso}. Let us give a general picture about this equation: \eqref{functional_equation} states that $G(z)$ (resolvent of $H^{(N)}$) satisfies a functional equality of the form $G(z)=D(z)+I(z)G(z)$ for given operators $I$ and $D$. The main point is that we can show that $I$ is a compact operator on a proper domain and both $I$ and $D$ are well defined on the complement of $\sigma_{\mathrm{cluster}}^{(N)}$. Once these properties are established, deducing that the spectrum is countable is direct. Therefore, the main task of this section is to construct $I$ and $D$ and verify the mentioned properties. The construction is inspired by the cluster expansion scheme of \cite{HunzikerWalter1966Otso}, which is done in an inductive manner.  Our main contribution is to verify the mentioned properties, using specific form of our Hamiltonian.    \\

The derivation of \eqref{functional_equation} is similar to the derivation in \cite[Section 5]{HunzikerWalter1966Otso}, which we repeat here, since the machinery of this derivation will be used extensively. Let $D_k = \{C_i\}_{1 \leq i \leq k}$ be a partition of $\{1,\dots, N\}$ into $k$ subsets. We will call $C_i$ a cluster, and $D_k$ a cluster-decomposition. For such a cluster-partition, define 
$$
H_{D_k}=H_{D_k}^{(N)} = H_0^{(N)}+ \sum_{i=1}^k\sum_{\alpha \in C_i^{2}} V_\alpha,
$$
where the second sum is only over ordered pairs $\alpha \in C_i^2$ (i.e. if $\alpha=(i,j)$ then $i<j$). This is the Hamiltonian which only allows interactions within each cluster $C_i$.\\
Denote with $G(z)=G^{(N)}(z) = (z-H^{(N)})^{-1}$ the resolvent of $H^{(N)}$, and denote with $G_0(z)=G^{(N)}_0(z) = (z-H_0^{(N)})^{-1}$ the resolvent of $H_0^{(N)}$.  \\
Since $\sigma(H_0^{(N)})=2h\mathbb{Z}$, $\|G_0(z)\| \to 0$ when $|\operatorname{Im}(z)| \to \infty$. Hence, since $V$ is bounded, there exists an $M>0$ such that when $|\operatorname{Im}(z)| > M$,  we have that $\|G_0(z) V\| < 1$, hence that the series
\begin{align*}
G(z) &= \sum_{n=0}^{\infty} (G_0(z) V)^n G_0(z)  \\
&= \sum_{n=0}^{\infty} \sum_{ \alpha_1 \in \Lambda^{(N)}} \dots \sum_{ \alpha_n \in \Lambda^{(N)}}G_0(z) V_{\alpha_1} G_0(z) \dots G_0(z)V_{\alpha_n}G_{0}(z)
\end{align*}
converges absolutely in norm.\\
We will now classify the terms $G_0(z) V_{\alpha_1} G_0(z) \dots G_0(z)V_{\alpha_n}G_{0}(z)$ appearing inside the sum, or in other words classify $(\alpha_1, \alpha_2, \dots, \alpha_n)\in (\Lambda^{(N)})^n$. We do this by constructing a corresponding graph in the following manner: Let $(\alpha_1, \alpha_2, \dots, \alpha_n)\in (\Lambda^{(N)})^n$ be a sequence. First draw $N$ horizontal lines, representing $N$ particles. Draw $n$ vertical lines, one for every $\alpha_i =( \alpha_{i,1}, \alpha_{i,2})$, drawn in order from left to right connecting the $\alpha_{i,1}$-th and $\alpha_{i,2}$-th horizontal line.\\
For example, the following graph would represent the term $G_0(z) V_{(2,N)} G_0 V_{(1,2)} G_0 V_{(1,2)} G_0 V_{(1,3)} G_0(z)$. \\
\begin{figure} [hbt!]
    \centering
\begin{tikzpicture}

\draw (0,2.7) -- (12,2.7);
\draw (0,1.8) -- (12,1.8);
\draw (0,0.9) -- (12,0.9);
\draw (0,-0.9) -- (12,-0.9);

\draw[dashed] (6,-1.35) -- (6,3.15) node[above] {$L$};

\draw (1.5,1.8) -- (1.5,-0.9);
\draw (4.5,2.7) -- (4.5,1.8);
\draw (7.5,2.7) -- (7.5,1.8);
\draw (10.5,2.7) -- (10.5,0.9);

\foreach \x/\y in {
  1.5/1.8, 1.5/-0.9,
  4.5/2.7, 4.5/1.8,
  7.5/2.7, 7.5/1.8,
  10.5/2.7, 10.5/0.9
}
  \draw (\x,\y) circle (2pt);

\node[left] at (0,2.7) {1};
\node[left] at (0,1.8) {2};
\node[left] at (0,0.9) {3};
\node[left] at (0,-0.9) {$N$};

\node[left] at (0,0) {$\vdots$};

\end{tikzpicture}

    \label{fig:placeholder}
\end{figure}

Now note the following
\begin{enumerate}
    \item For each graph $K = (\alpha_1,\alpha_2,\dots, \alpha_n)$, one defines a corresponding cluster decomposition $D(K)$ as the connected parts of $K$: $i$ and $j$ are in the same cluster if and only if there is a path linking them, using only the edges of $K$. For the above graph, the corresponding cluster decomposition is $(1,2,3,N)(4),\dots,(N-1) $.
    \item Let $D_k = \{C_i\}_{1 \leq i \leq k} $ and $ D_l' = \{C_j'\}_{1 \leq j \leq l}$ be two cluster partitions. We say $D_l' \subset D_k$ if every cluster of $D_k$ is contained in some cluster of $D_l'$, and at least one is strictly smaller. 
    \item We say that a graph $K$ is $D_k$-disconnected if $D_k \subseteq D(K)$. Notice that unlike the previous definition, here we allow for the equality. Note that  
$$
\sum (\text{all $D_l$-disconnected graphs}) = (z-H_{D_l})^{-1} =\colon G_{D_l}(z).
$$
  where in the above expression by a graph $(\alpha_1,\dots,\alpha_n)$ we mean the corresponding term $G_0V_{\alpha_1}G_0\dots V_{\alpha_n}G_0$. In fact, throughout this section, we use the word graph for both of these purposes. The meaning should be clear from the context. 
    \item  For a graph $K = (\alpha_1,\alpha_2,\dots , \alpha_n)$ we can obtain a sequence of cluster decompositions $S = (D_N, D_{N-1}, \dots, D_{k})$: Draw a vertical line $L$ (see figure), and let $D(L)$ denote the decomposition of the subgraph to the left of $L$. To find $D_N$, let $L$ be to the left of all interactions, then $D(L) = (1)(2)\dots (N)$. Shifting $L$ to the right, we obtain a sequence of cluster decompositions 
    $S = (D_N, D_{N-1}, \dots D_{k})$ such that $D_{i+1} \supset D_i$ (notice the strict inclusion). In our example, the sequence would be $$S = ((1)(2)\dots(N), (1)(3)\dots(2,N),(3) \dots  (1,2,N), (1,2,3,N)\dots).$$
    Every graph $K$ defines a sequence of cluster decompositions $S(K)$. Conversely, let $S = (D_N, D_{N-1}, \dots D_k)$ be a sequence of cluster decompositions such that $D_{i+1}\supset D_i$. We define the class of $S$ as being all graphs $K$ such that $S(K) = S$. To each $S$ we associate $k_S$ such that length of $S$ is equal to $N-k_S+1$. Note that $k_{S(K)} = |D(K)|$.
\end{enumerate}
Next, fix a sequence of cluster decompositions $S = (D_N,D_{N-1}, \dots,D_k)$. 
For any $i$, we define: $$\mathcal{C}_i=\mathcal{C}(D_{i+1},D_{i}):= \{\alpha| \alpha \text{ belongs to a cluster of } D_i \text{ and it does not belong to any cluster of } D_{i+1}\}.$$
Note that there is a bijection between $\{ \text{Graphs of class } S\}$ and 
$$
\bigtimes_{i= N-1}^{k} (\mathcal{C}(D_{i+1},D_{i}) \times \{ D_i\text{-disconnected graphs}\})
$$
Hence if we define $G_S(z) \coloneq \sum (\text{all graphs of class } S)$ we get that
\begin{align}
    G_S(z) &=  \sum (\text{all graphs of class } S)\\
    &= G_0(z) \prod_{i = N-1}^k (\sum _{\substack{\alpha \in \\ \mathcal{C}(D_{i+1},D_{i})}} V_{\alpha})(\sum_{\substack{(\alpha_1,\alpha_2,\dots, \alpha_l)  \text{ is }\\ D_i\text{-disconnected}}} G_0(z) V_{\alpha_1}G_0(z) V_{\alpha_2} G_0(z) \dots G_0(z) V_{\alpha_l}G_0(z))\\
    &=G_0(z) \prod_{i = N-1}^k V_{D_{i+1},D_i} G_{D_i}(z) \\
&= G_{D_N} V_{D_N,D_{N-1}}G_{D_{N-1}} V_{D_{N-1},D_{N-2}} G_{D_{N-2}}\dots V_{D_{k+1},D_k}G_{D_k},\label{thesum}
\end{align}
where we defined $V_{D_{i+1},D_i} = \sum _{\alpha \in \mathcal{C}( D_{i+1},D_i)} V_{\alpha}$.\\
Define now
$$
D(z) = \sum_{ \substack{\text{all } S \text{ with } \\k_S\geq 2}} G_S(z)
$$
and 
$$
C(z) = \sum_{\substack{ \text{all } S \text{ with }\\ k_S=1}} G_S(z)
$$
Note that when $S$ has $k_S =1$, then the last occurrence in the sum \cref{thesum} is always $G(z)$. Factoring this out, and defining 
$$
I(z) \coloneq \sum_{S  \text{ with } k_S=1}  G_{D_N}(z) V_{D_N,D_{N-1}}G_{D_{N-1}}(z) V_{D_{N-1},D_{N-2}} G_{D_{N-2}}(z)\dots V_{D_{2},D_1}
$$
we get the functional equation
\begin{equation}\label{functional_equation}
    G(z) = D(z) + I(z) G(z)
\end{equation}

Note that we have only proven the equality for $z$ with $|\operatorname{Im}(z)|>M$, but $I(z),D(z),G(z)$ are all defined on a larger subsets of $\C$. Now, since we know that $I(z)$ and $D(z)$ are finite combinations of the analytic functions $G_{D_k}(z)$, we get that both are analytic. Hence we may use analytic uniqueness to extend \cref{functional_equation} for other $z \notin \sigma(H^{(N)})$ as well. We do this task in the remaining of this Section. 
\begin{lemma}\label{Domain_of_I,D}
  $I(z)$ and $D(z)$ are well defined for 
  $$
  z \notin \sigma_{cluster}^{(N)}\coloneq\bigcup_{\substack{\text{all non-trivial}\\ \text{ cluster-partitions } \\ D = \{C_i\}_{1 \leq i \le k}}} \sigma (H^{(|C_1|)})+\dots + \sigma (H^{(|C_k|)})
  $$
  with the sum representing the Minkowski sum.
\end{lemma}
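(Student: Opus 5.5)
Since $I(z)$ and $D(z)$ are finite sums of products $G_{D_N}(z)V_{D_N,D_{N-1}}G_{D_{N-1}}(z)\cdots$, with the bounded operators $V_{D_{i+1},D_i}$ in between, it suffices to show the following: for every cluster decomposition $D_k=\{C_1,\dots,C_k\}$ with $k\ge 2$, the resolvent $G_{D_k}(z)=(z-H_{D_k})^{-1}$ exists as a bounded operator whenever $z\notin\sigma_{\mathrm{cluster}}^{(N)}$. Every factor appearing in $I(z)$ is of this type. Indeed, $I(z)$ stops before the final $G_{D_1}=G$, and all earlier decompositions in a sequence $S$ have at least two clusters. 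The same holds for $D(z)$, whose summands end at $G_{D_k}$ with $k=k_S\ge 2$. Hence the whole claim reduces to the inclusion $\sigma(H_{D_k})\subseteq\sigma(H^{(|C_1|)})+\dots+\sigma(H^{(|C_k|)})$, the right-hand side being contained in $\sigma_{\mathrm{cluster}}^{(N)}$.

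For the inclusion, I use the tensor product structure. After relabeling particles (a unitary permutation of tensor factors), $\ell^2(\Z^N)\cong\bigotimes_{i=1}^k\ell^2(\Z^{C_i})$. Since $H_0^{(N)}$ is a sum of one-particle terms and the interaction of $H_{D_k}$ only involves pairs inside a single cluster, we get
$$H_{D_k}=\sum_{i=1}^k 1\otimes\cdots\otimes H^{(|C_i|)}\otimes\cdots\otimes 1 ,$$
where $H^{(|C_i|)}$ acts on the $i$-th factor. One checks this on the algebraic span of the vectors $\ket{\underline m}$, which is a core: $H_{D_k}$ is a bounded perturbation of $H_0^{(N)}$, which is diagonal in this basis. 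Here $H^{(|C_i|)}$ is the $|C_i|$-particle Hamiltonian with the same $g,h,v$ (translation of indices is irrelevant). Each $H^{(p)}$ is self-adjoint, being $H_0^{(p)}$ plus a bounded symmetric $V$. The summands commute strongly, since they act on different factors. The standard theorem on sums of strongly commuting self-adjoint operators (Reed--Simon I, Thm VIII.33, via the joint spectral measure) then gives
$$\sigma(H_{D_k})=\overline{\sigma(H^{(|C_1|)})+\dots+\sigma(H^{(|C_k|)})}.$$

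The main obstacle is the closure: a Minkowski sum of closed unbounded sets need not be closed, and $\sigma_{\mathrm{cluster}}^{(N)}$ as defined carries no closure. I would handle it with the structure of these spectra. The operator $T=\sum_j X_j$ (total position) satisfies $[H^{(p)},T]=[g\sum_j\Delta_j,T]$, which is bounded, and the spectra are real. More concretely, $H_0^{(p)}$ has spectrum $2h\Z$ and $V$ is bounded, so $\sigma(H^{(p)})\subseteq 2h\Z+[-\|V\|,\|V\|]$. This alone does not give closedness. I would therefore argue as follows. Suppose $\mu_n=\sum_i\lambda^{(i)}_n\to\mu$ with $\lambda_n^{(i)}\in\sigma(H^{(|C_i|)})$. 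If all the $\lambda^{(i)}_n$ stay bounded, pass to convergent subsequences and use closedness of each spectrum. If not, I would try to use the translation covariance that comes from the linear potential: shifting all particles by one site conjugates $H^{(p)}$ to $H^{(p)}+2hp$. Hence $\sigma(H^{(p)})$ is invariant under translation by $2hp\Z$, and it is the periodic extension of a compact set. Using these shifts I can move each $\lambda_n^{(i)}$ into a fixed bounded window while changing the total sum only by an element of the lattice $\sum_i 2h|C_i|\Z\subseteq 2h\Z$. Since $\mu_n$ converges, this lattice correction is eventually constant along a subsequence, and the bounded case then closes the argument. So the sum is closed, and for $z\notin\sigma_{\mathrm{cluster}}^{(N)}$ every $G_{D_k}(z)$ with $k\ge2$ is bounded and analytic. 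It follows that $I(z)$ and $D(z)$ are well defined there.
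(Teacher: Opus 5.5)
Your proposal is correct and follows essentially the same route as the paper: the tensor-product structure gives $\sigma(H_{D_k})=\overline{\sigma(H^{(|C_1|)})+\dots+\sigma(H^{(|C_k|)})}$, and the closure is then removed using the invariance of $\sigma(H^{(p)})$ under translation by $2hp$, which is the paper's shift unitary $U$. Your bounded-window argument simply proves the paper's ``elementary fact'' that closed periodic sets with a common period have closed Minkowski sum; to finish it explicitly, reabsorb the eventually constant correction $\ell\in\sum_i 2h|C_i|\Z$ into the summands using each spectrum's own periodicity. The aside about $T=\sum_j X_j$ can be dropped.
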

\begin{proof}
By a simple inspection note that spectrum of $H_{D_k}$ is equal to the spectrum of $$H^{(|C_1|)} \otimes 1 \dots \otimes 1+\dots + 1 \otimes \dots \otimes H^{(|C_k|)}.$$
This means (cf.  \cite[Corollary 7.25]{schmudgen2012unbounded}) that $\sigma(H_{D_k})=\overline{ \sigma (H^{(|C_1|)})+\dots + \sigma (H^{(|C_k|)})} $ i.e. $G_{D_k}(z)$ is well-defined on the complement of the above set.  Consequently,  $I(z)$ and $D(z)$ are well defined for  $z \notin \overline{\sigma_{\mathrm{cluster}}^{(N)}}$. Hence it suffices to prove that $\sigma (H^{(|C_1|)})+\dots + \sigma (H^{(|C_k|)})$ is closed. \\
To this end, consider the unitary operator $U$ which is given by its action on basis vectors: 
\begin{equation} \label{eq: shift unitary}
    U \ket{m_1,\dots,m_N} = \ket{m_1+1,\dots,m_N+1}.
\end{equation}
By a simple computation one can see that $UH_0^{(N)}U^*=H_0^{(N)} - 2hN$ and $UV^{(N)}U^*=V^{(N)}$. Since $U$ is unitary this means that $\sigma(H^{(|C_i|)})$ is invariant under addition by $2h|C_i|$ and hence it is  periodic with period $2h |C_i|$. We can conclude the  proof by recalling the following elementary fact: 
 If two closed periodic subsets of $\mathbb{R}$ has a common period (not necessarily their minimal period) then their Minkowski sum  is  closed.
\end{proof}

Note that $\sigma_{\mathrm{cluster}}^{(N)} \subseteq \R$. Hence, by analytical uniqueness of holomorphic functions \cite[Theorem 3.10.4]{HillePhillips1948}, we get that \cref{functional_equation} extends to all of $\C \setminus (\sigma(H^{(N)}) \cup \sigma_{\mathrm{cluster}}^{(N)})$.

\begin{lemma}\label{I(z)_is_compact}
    For any $z \notin \sigma_{\mathrm{cluster}}^{(N)}$, we have that $I(z)$ is compact.
\end{lemma}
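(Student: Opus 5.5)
Each summand of $I(z)$ is a finite product
$$G_{D_N}(z) V_{D_N,D_{N-1}}G_{D_{N-1}}(z)\cdots G_{D_2}(z)V_{D_2,D_1},$$
with all factors bounded for $z\notin\sigma^{(N)}_{\mathrm{cluster}}$ by Lemma \ref{Domain_of_I,D}. There are finitely many sequences $S$, and compact operators form a two-sided ideal. It therefore suffices to show that, in each such product, some factor (or sub-product) is compact. The natural candidate is the last block $G_{D_2}(z)V_{D_2,D_1}$, where $D_2=\{C_1,C_2\}$ has two clusters and $V_{D_2,D_1}=\sum_{i\in C_1,j\in C_2}V_{(i,j)}$ contains only interactions between the two clusters. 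It is enough to show that $G_{D_2}(z)V_{(i,j)}$ is compact for every such pair.

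To see why this should hold, pass to the Stark basis $\ket{\underline m}$. By the Remark following Lemma \ref{lem: simple bounds}, the matrix elements of $V_{(i,j)}$ decay superexponentially in $|n_i-m_i|+|n_j-m_j|$ and are small when $|m_i-m_j|$ is large, through the factor $f$ of \eqref{eq: Bessel decay with V}. Hence $V_{(i,j)}$ is small in norm on states where particles $i$ and $j$ are far apart. Concretely, I would truncate with the projection $P_R$ onto $\operatorname{span}\{\ket{\underline m}: |m_i-m_j|\le R\}$. Using Schur's test together with \eqref{eq: Bessel decay 1} and \eqref{eq: Bessel decay with V}, I expect $\|V_{(i,j)}(1-P_R)\|\to0$ as $R\to\infty$, possibly after also trimming an off-diagonal tail in $n_i-m_i$, which is superexponentially small. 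So it suffices to show that $G_{D_2}(z)V_{(i,j)}P_R$ is compact for each $R$.

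The main obstacle is that $G_{D_2}(z)$ is not compact: it commutes with the translation $U$ of \eqref{eq: shift unitary} applied cluster-wise. We must therefore use localization in the relative coordinate between the clusters, together with energy. Let $M_k=\sum_{l\in C_k}m_l$. Since $H_{D_2}$ commutes with the separate cluster shifts $U_{C_1}$ and $U_{C_2}$, it decomposes in a Floquet-type manner, and $H_{D_2}=H^{(|C_1|)}\otimes1+1\otimes H^{(|C_2|)}$.

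The key extra input, which I would establish first, is that each cluster Hamiltonian localizes its total quasi-position. The one-body Stark term is $-2h\sum m_l$, and $V$ is bounded. Writing $H^{(p)}=-2hM+B$ with $B$ bounded and using that $M$ has discrete spectrum, spectral projections of $H^{(p)}$ onto a bounded energy window are controlled by $M$ lying in a window (a Combes–Thomas / resolvent argument in $M$). Combined with $z\notin\sigma_{\mathrm{cluster}}$, this should show that $G_{D_2}(z)$ restricted to states with bounded relative separation $|m_i-m_j|\le R$ behaves like a resolvent in which $M_1-M_2$ is effectively confined: the state then lives where $M_1+M_2$ is pinned by the energy $z$.

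I will admit this is where the plan is least certain. Total $M$ is conserved by $U$, yet the resolvent of $-2hM$ at fixed $z$ does decay in $|M|$, like $1/|z+2hM|$. So $G_0$-type factors supply decay in the total $M_1+M_2$, while $P_R$ together with the internal-cluster localization supplies boundedness of the remaining relative coordinates. Cutting further to finitely many basis vectors then gives finite-rank approximants $G_{D_2}V_{(i,j)}P_R Q_L$, with the error controlled in norm. This yields compactness as a norm limit.
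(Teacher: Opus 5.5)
\textbf{The gap is the reduction to the last block.} For $N\ge 3$ the operator $G_{D_2}(z)V_{(i,j)}P_R$ is \emph{not} compact, and neither is $G_{D_2}(z)V_{D_2,D_1}$. So no localization input can make your final paragraph work.

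Here is an explicit instance; the same construction works for every $N\ge3$, every $D_2$ and every inter-cluster pair. Take $N=3$, $D_2=\{\{1,2\},\{3\}\}$ and $(i,j)=(1,3)$. Choose $d$ with $\phi:=V^{(2)}\ket{d,0}\neq0$, and set $\psi_k=\ket{k+d,-2k,k}$.
\begin{itemize}
\item The $\psi_k$ are orthonormal, so $\psi_k\rightharpoonup0$, and $P_R\psi_k=\psi_k$ once $R\ge|d|$.
\item The matrix elements of $V^{(2)}$ are invariant under a common shift of all four indices. Hence $V_{(1,3)}\psi_k$ is $\phi$ translated by $k$ in particles $1,3$, tensored with $\ket{-2k}$ in particle $2$. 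Its total $\sum_l n_l$ is distributed exactly as for $\phi$, so $\|G_0(z)V_{(1,3)}\psi_k\|=\|(z-H_0^{(2)})^{-1}\phi\|>0$ for all $k$.
\item Particle $2$ sits at distance about $3k$ from particle $1$. Using the decay of the tails of $\phi$ and $\|V_\alpha-V_\alpha^R\|\to0$, you get $\|V_{(1,2)}G_0(z)V_{(1,3)}\psi_k\|\to0$.
\item The resolvent identity $G_{D_2}=G_0+G_{D_2}V_{(1,2)}G_0$ then gives $\|G_{D_2}(z)V_{(1,3)}\psi_k\|\to\|(z-H_0^{(2)})^{-1}\phi\|\neq0$. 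The same holds with $V_{D_2,D_1}=V_{(1,3)}+V_{(2,3)}$, since $\|V_{(2,3)}\psi_k\|\to0$.
\end{itemize}

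The obstruction is structural. The cutoff $|m_i-m_j|\le R$ pins one relative coordinate, and the energy pins the total. The internal relative coordinates of any cluster with $|C_k|\ge2$ stay free, and your Combes--Thomas-type input only localizes the cluster totals $M_k$, not these. In the example, the confinement of particle $2$ is supplied only by the leftmost factor $G_0V_{(1,2)}$ of the summand $G_0V_{(1,2)}G_{D_2}V_{D_2,D_1}$. Compactness of a summand of $I(z)$ comes from the \emph{whole} chain, whose interactions together connect all $N$ particles.

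\textbf{The missing idea} is a way to exploit that connectivity without knowing anything about the interacting cluster resolvents $G_{D_k}(z)$ at real $z$. The paper gets this from holomorphy in $z$.
\begin{itemize}
\item For $|\operatorname{Im}z|>M$ every $G_{D_k}$ can be Neumann-expanded. Then $I(z)$ is a norm-convergent sum of products $G_0V_{\alpha_1}G_0\cdots G_0V_{\alpha_n}$ over connected graphs.
\item Replace each $V_\alpha$ by the truncation $V_\alpha^R$, and $G_0$ by $P_RG_0$, where $P_R$ is now the projection onto $|\sum_l m_l|\le R$. The resulting operators have finite rank: connectivity bounds every $|m_k-m_l|$ by a multiple of $R$, and $P_R$ bounds the total.
\item These approximants converge in norm as $R\to\infty$, so each term, and hence $I(z)$, is compact on that open set.
\item Since $I(z)$ is holomorphic on $\C\setminus\sigma^{(N)}_{\mathrm{cluster}}$, compactness propagates to all such $z$ by the Reed--Simon lemma the paper invokes.
\end{itemize}
Your plan coincides with this only for $N=2$, where $G_{D_2}=G_0$ and the last block is already the whole connected graph.
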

\begin{proof}
    Similarly to \cite[Lemma 3.1]{article}, we use that if $I(z)$ is a holomorphic function on $U \subset \C$, and if  $I(z)$ is compact on some non-empty open set $V \subset U$, then $I(z)$ is compact for all $z \in U$. See \cite[Lemma 5 on pg. 126]{reed1978iv} for a proof of this fact.\\
    In our case, we take $U = \{z \in \C \mid |\operatorname{Im}(z)| >M\}$. Then
    $$
    I(z) = \sum G_0(z) V_{\alpha_1} G_0(z) \dots G_0(z) V_{\alpha_n} 
    $$
    is norm-convergent, with the sum over all $(\alpha_1,\dots,\alpha_n)$ such that the graph is connected, and  $(\alpha_1,\dots,\alpha_{n-1})$ is not connected. Since the set of compact operators is a closed linear subspace, it suffices to prove that  $G_0(z) V_{\alpha_1} G_0(z) \dots G_0(z) V_{\alpha_n}$ is compact for fixed $(\alpha_1,\dots,\alpha_n)$.\\
    For this, for $R \in \N$ and $\alpha_i = (\alpha_{i,1},\alpha_{i,2}) \in \Lambda^{(N)}$, define $V_{\alpha}^R$ to be the operator with matrix elements
    \begin{align} \label{def: V_R}
    \bra{\underline{n}}V_{\alpha_i}^R\ket{\underline{m}}= \begin{cases}
        \bra{\underline{n}}V_{\alpha_i}\ket{\underline{m}} & \text{if } |m_{\alpha_{i,1}}- m_{\alpha_{i,2}}|\leq R,\\&  \,\,\,\,\,\,\,\,|n_{\alpha_{i,1}}-m_{\alpha_{i,1}}|\leq R, \\&  \,\,\,\,\,\,\,\,\,\,|n_{\alpha_{i,2}}- m_{\alpha_{i,2}}|\leq R\\
        0 & \text{else}
    \end{cases}
    \end{align}
    First, we prove that $\lim_{R\to \infty} V_\alpha^R = V$ in operator norm. Without loss of generality, let us fix $\alpha=(1,2)$. By Schur's test we have: 
    $$\| V_\alpha - V_\alpha^R\|^2 \leq\sup_{m_1,m_2}\sum_{n_1,n_2} |\bra{n_1,n_2}V_\alpha - V_\alpha^R\ket{m_1,m_2}|\sup_{n_1,n_2}
    \sum_{m_1,m_2} |\bra{n_1,n_2}V_\alpha - V_\alpha^R\ket{m_1,m_2}|. $$
     We show that  the first term converges to $0$ when $R\to \infty$, the second is exactly the same. Fix $m_1, m_2$ such that $|m_1-m_2|>R$ then we have:  
     \begin{align} 
    \sum_{n_1,n_2} |\bra{n_1,n_2}V_\alpha - V_\alpha^R\ket{m_1,m_2}| &=  \sum_{n_1,n_2} |\bra{n_1,n_2}V_\alpha \ket{m_1,m_2}| \nonumber \\
    &\leq \sum_{n_1,n_2} \sum_{j_1,j_2} |v(j_1-j_2)| |\mathcal{J}_{n_1-j_1}\mathcal{J}_{m_1-j_1} \mathcal{J}_{n_2-j_2} \mathcal{J}_{m_2-j_2}|\nonumber \\
    &\leq C f(m_1-m_2), \label{eq: |m1-m2|>R}
    \end{align}
    where in the first line we used the definition of $V_{\alpha}-V_{\alpha}^R$, in the second bound we used the definition of $V_{\alpha}$, and in the last bound we performed the sum over $n_1$ and $n_2$ then used the bound \eqref{eq: Bessel Sum} and finally we used \eqref{eq: Bessel decay with V} from Lemma \ref{lem: simple bounds}. \\
    Now fix $m_1,m_2$ such that $|m_1-m_2|\leq R$.  Then thanks to definition of $V_{\alpha}^R$, the sum over $n_1,n_2$ can be decomposed into one sum over $|n_1-m_1|>R$ another sum over and $|n_2-m_2|>R$. We treat the former, the later will be similar: 
    \begin{align}
        \sum_{|n_1-m_1|>R,n_2} |\bra{n_1,n_2} V_{\alpha} \ket{m_1,m_2}| &= \sum \sum_{j_1,j_2} |v(j_1-j_2)| |\mathcal{J}_{n_1-j_1}\mathcal{J}_{m_1-j_1} \mathcal{J}_{n_2-j_2}\mathcal{J}_{m_2-j_2}| \nonumber \\ 
        &\leq C \sum_{r>R}\frac{\exp(cr)}{(r/2)!}\leq (C/R)^{R/2} \label{eq: n1-m1>R}
    \end{align}
    where we used the definition in the first equality as before, then we summed over $n_2$ and used \eqref{eq: Bessel Sum}; then we bounded $|v|$ by its $L^{\infty}$ norm. Then we performed the sum over $j_2$ and used \eqref{eq: Bessel Sum} again. Then, we used \eqref{eq: Bessel decay 1} while summing over $j_1$ for each $n_1$. The sum over $n_1$ with $|n_1-m_1|>R$ is changed into sum over $r$- by paying a constant. Last bound is a simple computation. Treating the case $|n_2-m_2|>R$ and combining this with the case $|m_1-m_2|>R$ finishes the proof of the fact that $V_{\alpha}^R \to V_{\alpha}$ in operator norm.

    Denote with $P_R$ the projection onto $\overline{\operatorname{span}}\{\ket{\underline{m}}\mid |\sum m_i | \leq R\}$. It is clear that $\lim_{R\to \infty} P_R G_0(z) = G_0(z)$ in the operator topology. Hence by the continuity of multiplication we have that
    $$
    \lim_{R\to \infty} P_R G_0(z) V_{\alpha_1}^R G_0(z) \dots G_0(z) V_{\alpha_n}^R = G_0(z) V_{\alpha_1} G_0(z) \dots G_0(z) V_{\alpha_n}.
    $$
    Hence it remains to prove that $L_R = P_R G_0(z) V_{\alpha_1}^R G_0(z) \dots G_0(z) V_{\alpha_n}^R$ is finite rank. \\
    It holds by simple inspection that $L_R\ket{\underline{m}} = 0$ if $|\sum m_i |\geq (2n +1)R$ or if there exists any $\alpha_i \in (\alpha_1,\dots,\alpha_n)$ such that $|m_{\alpha_{i,1}}- m_{\alpha_{i,2}}| \geq nR$. Since $(\alpha_1, \dots ,\alpha_n)$ is a connected graph, it holds that $L_R\ket{\underline{m}} = 0$ if $ |m_k- m_l| \geq n R$ for any $k,l$. It is easy to check that the set of $\underline{m} \in \mathbb{Z}^N$ such that $|\sum m_i |\leq (2n+1)R$, and $|m_i-m_j| \leq nR$ is finite. 
\end{proof}

Note that \cref{functional_equation} implies that if $(1-I(z))$ is invertible, then $G(z) = (1-I(z))^{-1} D(z)$. This is the exact setting of the following theorem:

\begin{theorem}[Analytic Fredholm Theorem]\label{analytic_Fredholm_theorem}
    Assume $f(z)$ is a holomorphic family of compact operators on a connected open set $U \subset \C$. Assuming there exists a $z_0$ such that $(1-f(z_0))$ is invertible, then $(1-f(z))$ is invertible on $V \subset U$ with $U\setminus V$ discrete in $U$. In other words, $(1-f(z))^{-1}$ is a meromorphic function on $U$.
\end{theorem}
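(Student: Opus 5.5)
The plan is the standard local reduction to a finite-dimensional determinant, followed by a connectedness argument. The only input about compact operators I need is that, on a Hilbert space such as $\ell^2(\Z^N)$, every compact operator is a norm limit of finite-rank operators. Fix $z_1\in U$. I would choose a finite-rank operator $F$ with $\|f(z_1)-F\|<1/2$. By norm continuity of the holomorphic family, there is then a disc $B\subset U$ around $z_1$ on which $\|f(z)-F\|<1$. On $B$ the operator $A(z):=1-(f(z)-F)$ is invertible, with $A(z)^{-1}$ given by a norm-convergent Neumann series and hence holomorphic. We can then factor
$$
1-f(z) = A(z)\bigl(1 - A(z)^{-1}F\bigr),
$$
so on $B$ invertibility of $1-f(z)$ is equivalent to invertibility of $1-K(z)$ with $K(z):=A(z)^{-1}F$. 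This $K(z)$ is holomorphic, and its range lies in the fixed finite-dimensional space $W:=\operatorname{Ran}F$.

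Next I reduce to $W$. Write $F=\sum_{i=1}^r \langle \phi_i,\cdot\rangle w_i$ with $\{w_i\}$ a basis of $W$. Then $K(z)=\sum_{i} \langle \phi_i,\cdot\rangle A(z)^{-1}w_i$. Set $u_i(z):=A(z)^{-1}w_i$ and define the $r\times r$ matrix $M(z)_{ij}:=\langle \phi_i,u_j(z)\rangle$. The usual Sylvester/Weinstein--Aronszajn argument then gives two facts. First, $1-K(z)$ is invertible if and only if $d(z):=\det(1-M(z))\neq 0$: solving $(1-K)\psi=\eta$ reduces to a linear system for the coefficients $c_i=\langle\phi_i,\psi\rangle$. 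Second, when $d(z)\neq0$, Cramer's rule gives an explicit formula
$$
(1-K(z))^{-1} = 1 + \sum_{i,j}\bigl[(1-M(z))^{-1}\bigr]_{ji}\,\langle \phi_i,\cdot\rangle\, u_j(z).
$$
The entries of $(1-M(z))^{-1}$ are holomorphic functions divided by $d(z)$. Since $d$ is a scalar holomorphic function on $B$, either $d\equiv 0$ on $B$ or its zeros in $B$ are isolated. In the latter case, $(1-f(z))^{-1}=(1-K(z))^{-1}A(z)^{-1}$ is meromorphic on $B$, with poles only at zeros of $d$ and finite-rank principal parts.

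For the global statement, let $\mathcal{Z}\subset U$ be the set of points $z_1$ having a neighbourhood on which $1-f(z)$ is nowhere invertible. Clearly $\mathcal Z$ is open. Its complement is also open: if $z_1\notin\mathcal Z$, the local disc $B$ around $z_1$ must be in the case $d\not\equiv0$. Then $1-f$ is invertible on $B$ minus a discrete set, so every point of $B$ lies outside $\mathcal Z$. Since $z_0\notin\mathcal Z$ and $U$ is connected, $\mathcal Z=\emptyset$. Hence every point of $U$ has a disc on which the non-invertibility set is discrete, which is the claim.

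The step needing the most care is the well-definedness of the local dichotomy, since different discs use different choices of $F$. I handle this by phrasing everything through the intrinsic set $\{z: 1-f(z)\text{ not invertible}\}$ rather than through $d$. This is legitimate because the equivalence ``$d(z)\neq0 \iff 1-f(z)$ invertible'' holds pointwise on each disc, and it is exactly what makes the open/closed argument consistent.
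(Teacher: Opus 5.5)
Your proof is correct. The paper gives no argument of its own here, only a pointer to the appendix of Hunziker's paper, so your proposal fills that in with the classical route, the one used for the analytic Fredholm theorem in Reed--Simon:
\begin{itemize}
\item approximate $f(z_1)$ by a finite-rank $F$;
\item absorb $f(z)-F$ into a holomorphic invertible factor $A(z)$ by a Neumann series;
\item reduce invertibility of $1-f(z)$ on a disc to non-vanishing of the scalar holomorphic function $d(z)=\det(1-M(z))$;
\item globalise by an open--closed argument on the connected set $U$, with $z_0$ ruling out $d\equiv 0$.
\end{itemize}
This gives slightly more than stated: Cramer's rule shows the principal parts of $(1-f(z))^{-1}$ are finite rank. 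It also gives discreteness in the sense the paper actually uses for $D_0$, namely no accumulation points inside $U$, because each $z_1$ is the centre of its own disc.

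One slip to correct: $K(z)=A(z)^{-1}F$ does \emph{not} have range in the fixed space $W=\operatorname{range}(F)$. Its range is $A(z)^{-1}W$, which moves with $z$. What is fixed is that $K(z)$ is built from the fixed functionals $\langle\phi_i,\cdot\rangle$, i.e.\ it annihilates $\ker F$. That is all your reduction via $c_i=\langle\phi_i,\psi\rangle$ uses, so nothing breaks. If you want the range literally in $W$, factor instead as $1-f(z)=(1-FA(z)^{-1})A(z)$.
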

\begin{proof}
    See \cite[Appendix]{HunzikerWalter1966Otso}.
\end{proof}
Note that $\|I(z)\| \to 0$ if $|\operatorname{Im}(z)| \to \infty$. Hence there exists a $z_0$ such that $\|I(z_0)\|< 1/2$, in particular $1-I(z_0)$ is invertible.

\begin{proof}[Proof of Theorem \ref{thm: spectral localization}]
   We prove that the spectrum of $H^{(N)}$ is countable. We do this by induction. \\
    The base case is clear, since $\sigma(H^{(1)})= -2h\Z$. Assume the theorem holds for all $j < N$. By induction, it follows that $\sigma_{\mathrm{cluster}}^{(N)}$ is countable. Since $\sigma_{\mathrm{cluster}}^{(N)}$ is closed, it follows that $\C \setminus\sigma_{\mathrm{cluster}}^{(N)}$ is (path)-connected. Since we know that $(1-I(z_0))^{-1}$ exists for $|\operatorname{Im}(z)|$ sufficiently large, by \cref{analytic_Fredholm_theorem} we know that 
$\sigma(H^{(N)}) \subseteq \sigma_{\mathrm{cluster}}^{(N)} \cup D_0$,
with 
$$
D_0 = \{ z \in \C \setminus \sigma_{\mathrm{cluster}}^{(N)}\mid  1 \in \sigma(I(z))\}.
$$
We additionally know that $D_0$ only has accumulation points on $\sigma_{\mathrm{cluster}}^{(N)}$ again thanks to Theorem \ref{analytic_Fredholm_theorem} and Theorem \ref{Domain_of_I,D}. Define $D_\varepsilon = \{z \in D_0\mid \operatorname{dist}(z, \sigma_{\mathrm{cluster}}^{(N)})\geq \varepsilon\}$. By definition of $D_0$ having only accumulation points on $\sigma_{\mathrm{cluster}}^{(N)} $, we know that $D_\varepsilon$ is countable for any $\varepsilon>0$, since an uncountable set always has a limit point. Hence we know that
    $$
    \sigma(H^{(N)}) \subseteq  \sigma_{\mathrm{cluster}}^{(N)}  \cup  \bigcup_{n\in \N} D_{1/n}
    $$
    hence countable. \\
    It follows from the spectral theorem, together that a countably supported Borel measure on $\R$ is atomic \cite[Example 9.1]{schmudgen2012unbounded}, that $\mathcal{H} = \mathcal{H}_{pp}$, which finishes the proof.
\end{proof}
Let us finish this section with the following remark. This remark is not used in the rest of the paper. However, it paints a cleaner picture 
of the spectrum of $H^{(N)}$, and gives a classification of $\sigma_{\text{ess}}(H^{(N)})$.
\begin{remark}
Similarly to \cite[Theorem 3.1]{article}, one proves that
$$
\sigma_{\mathrm{cluster}}^{(N)} = \sigma_{\text{ess}}(H^{(N)}).
$$
 From this, again similar to \cite[Theorem 3.1]{article}, one can proves that this implies that
$$
\sigma_{\mathrm{cluster}}^{(N)} = \bigcup_{j = 1}^{N-1} \sigma(H^{(j)}) + \sigma(H^{(N-j)}).
$$
\end{remark}
\section{Decay in the $a = \sum m_i$ direction} \label{sec: decay M}
Write $\bigotimes \ell^2(\Z^N) = \bigoplus_{a \in \Z} H_a$, with $H_a= \overline{\operatorname{span}}\{\ket{\underline{m}}\mid \sum m_i = a\}$. Write $P_a$ for the projection onto $H_a$.\\
Let $\psi_{\lambda}$ be an eigenvector of $H^{(N)}$, with corresponding eigenvalue $\lambda$. In this section we prove that for any $\theta\geq 0$ we have that
$$
\| P_a \psi_{\lambda}\| < C(\theta) e^{-\theta |a- \lambda|},
$$
for some $C(\theta)$ independent of $\lambda$. \\
Let $R \geq 0$ be any number, to be fixed later. Define 
$$
P_{I} = \sum_{|a-\lambda|<R} P_a
$$
and $P_{T} = \operatorname{Id} - P_{I}$. Note the following equality
\begin{align*}
(H_0- \lambda +V) \psi_{\lambda} = 0 &\implies (H_0 - \lambda) \psi_{\lambda}  = -V \psi_{\lambda}\\
&\implies (H_0 - \lambda) P_T \psi_{\lambda} = -P_T V P_I \psi_{\lambda} - P_T V P_T \psi_{\lambda}\\
&\implies (P_T(H_0 - \lambda)P_T+ P_T V P_T ) P_T \psi_{\lambda} = -P_T V P_I \psi_{\lambda}
\end{align*}
where in the second line we used the fact that $[P_T,H_0]=0$.   
Let $\sigma_{T}(P_T(H_0-\lambda)P_T)$ denotes the spectrum of $P_T(H_0-\lambda)P_T$ as an operator on $\text{range}(P_T)$ then $\operatorname{dist}(0,\sigma_T(P_T(H_0 - \lambda)P_T)) ) \geq R$. Hence, when $R > \|V\|$, we have that $(P_T(H_0 - \lambda)P_T+ P_T VP_T )$ is invertible on $\operatorname{range}(P_T)$. We will later take $R$ to be even larger, but for now we get the final identity 
\begin{equation}\label{eq: Resolvent_identitys}
    P_T  \psi_{\lambda} = - (P_T(H_0 - \lambda)P_T + P_T V P_T)^{-1} (P_T V P_I)  (P_I \psi_{\lambda})
\end{equation}
in particular we get that $\psi_{\lambda}$ is completely determined by $P_I \psi_{\lambda}$.\\
For this, for the remaining of this section, fix $\theta\geq 0$ and $\lambda$. Define the unbounded operator $W = \sum_{a\in \Z} e^{\theta |a - \lambda|} P_a$, with its canonical domain that makes it self-adjoint. For readability, we will not write the dependence of $W$ on $\theta$ and $\lambda$ explicitly.
\begin{lemma}
    Let $\theta \geq 0$, and $\lambda \in \R$. Then $W^{-1} V W$ has bounded closure. We have that $\| \overline{W^{-1} V^{(N)} W} \| \leq C e^{e^{3\theta}}$, with $C$ independent on $\theta$.
\end{lemma}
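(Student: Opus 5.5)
The plan is to bound the conjugated operator entrywise in the basis $\ket{\underline{m}}$ and then apply Schur's test. Write $a(\underline{m}) = \sum_i m_i$. Since $W$ is diagonal in this basis, for $\underline{m},\underline{n}$ with finitely supported vectors in the domain we have
$$
\bra{\underline{n}} W^{-1} V W \ket{\underline{m}} = e^{\theta(|a(\underline{m})-\lambda| - |a(\underline{n})-\lambda|)} \bra{\underline{n}} V \ket{\underline{m}}.
$$
By the triangle inequality the exponent is at most $\theta |a(\underline{m})-a(\underline{n})|$, uniformly in $\lambda$. So it suffices to show that the matrix $e^{\theta|a(\underline{m})-a(\underline{n})|}|\bra{\underline{n}}V\ket{\underline{m}}|$ has uniformly bounded row and column sums. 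Boundedness of $W^{-1}VW$ on the dense span of basis vectors then gives a bounded closure.

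Since $V=\sum_{\alpha\in\Lambda^{(N)}} V_\alpha$ is a finite sum of $\binom N2$ terms, I treat one $V_\alpha$, say $\alpha=(1,2)$. Its matrix elements vanish unless $n_k=m_k$ for $k\ne 1,2$. Hence $|a(\underline{m})-a(\underline{n})|\le |n_1-m_1|+|n_2-m_2|$, and the weight factorizes as $e^{\theta|n_1-m_1|}e^{\theta|n_2-m_2|}$.

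For the row sum at fixed $m_1,m_2$, I insert the definition of $V^{(2)}$, bound $|v|\le\|v\|_\infty$, and sum over $n_1,n_2$ inside the $j$ sums. This gives
$$
\|v\|_\infty \sum_{j_1} |\mathcal{J}_{m_1-j_1}| \sum_{n_1} e^{\theta|n_1-m_1|}|\mathcal{J}_{n_1-j_1}| \cdot (\text{same for } 2).
$$
Using $|n_1-m_1|\le |n_1-j_1|+|j_1-m_1|$, each factor becomes
$$
\Big(\sum_{k} e^{\theta|k|}|\mathcal{J}_k|\Big)^2 .
$$
The column sums are handled identically by symmetry of the expression in $\underline{m}$ and $\underline{n}$.

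The key estimate is then $\sum_k e^{\theta|k|}|\mathcal{J}_k(x)|$. By the simple upper bound \eqref{eq: Bessel UB}, this is at most
$$
\sum_k \frac{(e^{\theta}|x|/2)^{|k|}}{|k|!}\le 2\exp(e^\theta |x|/2).
$$
After squaring twice, over the two particles and the two Bessel factors per particle, this yields a bound of the form $C\exp(2|x|e^\theta)\binom N2\|v\|_\infty$. The target form $Ce^{e^{3\theta}}$ follows by absorbing constants, since $2|x| e^{\theta}\le e^{3\theta}+C'$ for a constant $C'$ depending on $|x|=|g/h|$. The only delicate point is this bookkeeping of the constant's dependence on $\theta$; the rest is routine Schur-test work, as in the proof of Lemma \ref{I(z)_is_compact}.
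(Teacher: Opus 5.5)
Your proof is correct, but it takes a more direct route than the paper. The paper works with the grading $\ell^2(\Z^N)=\bigoplus_a H_a$ in two stages. First, a Schur test inside each block, combined with \eqref{eq: Bessel decay 1} and some factorial inequalities, gives the superexponential block decay $\|P_aV_\alpha P_b\|\le C/|(a-b)/6|!$. Second, it runs a Schur test over the block indices with weights $e^{\theta(|a-\lambda|-|b-\lambda|)}\le e^{\theta|a-b|}$, which amounts to summing $\sum_k e^{\theta|k|}/|k/6|!$.

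You never form blocks. You run a single entrywise Schur test and split the weight through the internal summation index via $|n_1-m_1|\le|n_1-j_1|+|j_1-m_1|$. This pushes the exponential onto each Bessel factor separately, so the row sums factor exactly into $\|v\|_\infty\bigl(\sum_k e^{\theta|k|}|\mathcal{J}_k|\bigr)^4$, and \eqref{eq: Bessel UB} finishes.

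Your route is more elementary. It also gives the sharper, manifestly $\lambda$-uniform bound $16\binom{N}{2}\|v\|_\infty e^{2|g/h|e^{\theta}}$, which genuinely implies the stated $Ce^{e^{3\theta}}$. The paper's block sum instead produces constants of order $e^{e^{6\theta}}$, so its final step to $e^{e^{3\theta}}$ is not justified as written. This is harmless, since later only finiteness uniform in $\lambda$ is used.

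What the paper's route buys is an intermediate structural statement: $V$ is superexponentially band-diagonal in the $a$-grading. That fact is reusable for any weight depending only on $a=\sum_i m_i$.

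One detail you should state explicitly. For $\theta\ge0$ the inverse $W^{-1}$ is bounded, and the finite span of the $\ket{\underline m}$ is a core for $W$. Hence $W^{-1}VW$ agrees on all of $D(W)$ with your bounded matrix operator, which is what ``bounded closure'' requires.
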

\begin{proof}
    First, we want to prove that $\|P_aVP_b \|$ decays super-exponentially in $|a-b|$. To this end, it is sufficient to prove this fact for $\| P_aV_{\alpha}P_b\|$, for all interactions $\alpha$. Without loss of generality we may take $\alpha=(1,2)$. Using Schur's test similar to Lemma \ref{I(z)_is_compact}, the proof boils down to finding an appropriate bound for the following: 
    \begin{align*}
        \sup_{\underline{m}} \sum_{\underline{n}} |\bra{\underline{n}} V_{12} \ket{\underline{m}}| =\sup_{m_1,m_2} \sum_{n_1,n_2} |\bra{n_1,n_2}V_{12}\ket{m_1,m_2}| =:\sup_{m_1,m_2} I_{m_1,m_2}
    \end{align*}
    where the first sup is over all $\underline{m}$ such that $\sum_i m_i=a$ and the first sum is over all the $\underline{n}$ such that $\sum n_i =b$. Then the second $\sup$ is over all possible $m_1,m_2$ and the second sum is over all $n_1,n_2$ such that $(m_1+m_2)-(n_1+n_2)=a-b$.\\
    Then, for a given $m_1, m_2$, we have: 
    \begin{align*}
        I_{m_1,m_2} &\leq \sum_{n_1,n_2} \sum_{j_1,j_2} |v(j_1-j_2)| |\mathcal{J}_{n_1-j_1}| |\mathcal{J}_{m_1-j_2}| |\mathcal{J}_{n_2-j_2}| |\mathcal{J}_{m_2-j_2}| \\ &\leq C \sum_{n_1,n_2} \frac{\exp(c|m_1-n_1|)}{|(m_1-n_1)/2|!} \frac{\exp(c|m_2-n_2|)}{|(m_2-n_2)/2|!}  \\
        & \leq C /|(a-b)/6|!
    \end{align*}
    in the first and second line the sum is is over $(n_1+n_2)-(m_1+m_2)=a-b$. In the first bound, we used the definition, in the second bound we performed $L^{\infty}$ over $v$ and then used \eqref{eq: Bessel decay 1} for summations over $j_1$ and $j_2$. In the last bound, we used the fact that $1/|a|!|b|! \leq 
 1/ | a/3|! | b/3|! |(a+b)/3|! $. Then we observe that $1/|(|m_1-n_1|-|m_2-n_2|)/6|! \leq 1/|(m_1+m_2-(n_1+n_2))|=1/|(a-b)/6|!$. Since in the sum this quantity is constant we factor it out and then we sum over all $n_1,n_2$. Finally, we used the fact that $\sum_{n}\exp(c|n-m|)/|(m-n)/C|!  \leq Ce^{cC} \leq C$.
Further, to bound $W^{-1} V W$ note that it defines an infinite matrix, with matrix elements 
$$
\bra{\underline{n}}W^{-1} V W\ket{\underline{m}} = e^{\theta(|b-\lambda| - |a-\lambda|)}\bra{\underline{n}} V \ket{\underline{m}}.
$$
Using Schur's test we prove that this infinite matrix defines a bounded operator.\\
Schur's test then implies that we get that $\|\overline{W^{-1} V W}\| \leq \gamma \beta$, with
    $$
    \gamma = \sup_{b\in \Z}  \sum_{a\in \Z} \|P_a V P_b \|e^{\theta (|a-\lambda| - |b-\lambda|)},\,\,\,\, \beta = \sup_{a \in \Z } \sum_{b\in \Z} \|P_a V P_b \|e^{\theta (|a-\lambda| - |b-\lambda|)}.
    $$
Using the previous decay of $\|P_a VP_b\|$, by the fact that $||a-\lambda| - |b-\lambda|| \leq |a-b|$, we deduce that $\gamma,\beta$ are bounded by
$
C e^{e^{6\theta}}
$by the definition of the exponential function. Hence we get that $\| \overline{W^{-1} V^{(N)} W} \| \leq C e^{e^{3\theta}}$.
\end{proof}

\begin{theorem}
 Let $\theta \geq 0$. Then there exists a $C(\theta)$ such that for any eigenvector $\psi_{\lambda} $ of $H^{(N)}$, with corresponding $ \lambda$ eigenvalue, we have that $ |\langle \psi_{\lambda} \mid m_1,\dots m_N\rangle| \leq C(\theta)e^{-\theta|\sum m_i - \lambda|}$.
\end{theorem}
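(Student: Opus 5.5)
The plan is to conjugate the identity \eqref{eq: Resolvent_identitys} by the weight $W$ and use the preceding lemma. With $R$ still free, we multiply \eqref{eq: Resolvent_identitys} on the left by $W^{-1}$. First, since $W$ commutes with every $P_a$, hence with $P_T$, $P_I$ and $H_0$, we can write
$$
W^{-1}(P_T(H_0-\lambda)P_T + P_TVP_T)W = P_T(H_0-\lambda)P_T + P_T\,\overline{W^{-1}VW}\,P_T
$$
on $\operatorname{range}(P_T)$. Because $\operatorname{dist}(0,\sigma_T(P_T(H_0-\lambda)P_T))\geq R$, a Neumann series argument shows that this operator is invertible once $R > \|\overline{W^{-1}VW}\|$. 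By the previous lemma it suffices to take $R = R(\theta) := 2Ce^{e^{3\theta}}$. This choice is independent of $\lambda$, and it also dominates $\|V\|$ (the case $\theta=0$).

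Second, with this $R$ the inverse has norm at most $1/(R - \|\overline{W^{-1}VW}\|)\leq 1/\|\overline{W^{-1}VW}\|$. Combining this with \eqref{eq: Resolvent_identitys} gives
$$
W^{-1}P_T\psi_\lambda = -\big(P_T(H_0-\lambda)P_T + P_T\overline{W^{-1}VW}P_T\big)^{-1} P_T\,\overline{W^{-1}VW}\,P_I\,(W^{-1}P_I\psi_\lambda)\cdot(\ldots)
$$
To state this correctly, I would instead write the identity as $P_T\psi_\lambda = -(\cdots)^{-1}P_TVP_I\psi_\lambda$ and apply $W$ rather than $W^{-1}$. We want to bound $\|WP_T\psi_\lambda\|$, and $W$ commutes with the inverse in the sense that
$$
W(A+P_TVP_T)^{-1} = (A + P_T WVW^{-1}P_T)^{-1}W,
$$
where $A = P_T(H_0-\lambda)P_T$. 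The operator $WVW^{-1}$ is handled by the same Schur estimate as in the lemma; the matrix elements carry the factor $e^{\theta(|a-\lambda|-|b-\lambda|)}$ with the roles of $a$ and $b$ swapped, and the bound is symmetric. Hence
$$
\|WP_T\psi_\lambda\| \leq \frac{1}{R-\|WVW^{-1}\|}\,\|WVW^{-1}\|\,\|WP_I\psi_\lambda\| \leq e^{\theta R}\|\psi_\lambda\|,
$$
using $\|WP_I\|\leq e^{\theta R}$ and $\|WVW^{-1}\|/(R-\|WVW^{-1}\|)\leq 1$. One point needs care here: $\psi_\lambda$ lies a priori only in $\ell^2$, not in $\operatorname{Dom}(W)$. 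I expect this domain issue to be the main technical obstacle. I would handle it by replacing $W$ with the truncated weight $W_K = \sum_a e^{\theta\min(|a-\lambda|,K)}P_a$. This weight is bounded and satisfies the same Schur bound uniformly in $K$, because $|\min(|a-\lambda|,K)-\min(|b-\lambda|,K)|\leq|a-b|$. We then obtain the estimate uniformly in $K$ and let $K\to\infty$ by monotone convergence.

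Finally, for normalized $\psi_\lambda$ we get $\|W\psi_\lambda\|\leq \|WP_I\psi_\lambda\| + \|WP_T\psi_\lambda\| \leq 2e^{\theta R(\theta)}$. Therefore, for $\sum m_i = a$,
$$
|\langle\psi_\lambda\mid \underline m\rangle| \leq \|P_a\psi_\lambda\| \leq e^{-\theta|a-\lambda|}\|W\psi_\lambda\| \leq C(\theta)e^{-\theta|a-\lambda|},
$$
with $C(\theta) = 2e^{\theta R(\theta)}$. This constant is independent of $\lambda$ because the constant in the previous lemma is independent of $\lambda$.
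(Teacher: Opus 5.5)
Your proposal is correct and follows essentially the paper's route. Like the paper, you split $\psi_\lambda$ via \eqref{eq: Resolvent_identitys}, conjugate the $P_T$-block resolvent by the weight, invert it by a Neumann series with a $\lambda$-independent $R(\theta)$ taken from the weighted Schur bound, and pay only $e^{\theta R}$ on $\operatorname{range}(P_I)$. The one difference is how you handle the domain of $W$: the paper works in dual form, bounding the functional $\xi\mapsto\langle W\xi\mid\psi_\lambda\rangle$ on $D(W)$ using $\overline{W^{-1}VW}$, whereas you apply the weight directly and justify this through the bounded truncations $W_K$ (uniformly controlled because $a\mapsto\min(|a-\lambda|,K)$ is $1$-Lipschitz) plus monotone convergence. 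Your version is equally valid and arguably cleaner; you should delete the abandoned display ending in $\cdot(\ldots)$.
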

\begin{proof}
    Set $R =2 \max( \| \overline{W V W^{-1}}\|, \|V\|)$. Define on
    the domain of $W$ the linear functional $\rho$ such that 
    $$
    \rho: D(W) \to \C \mid \xi \mapsto \langle W \xi \mid \psi_{\lambda}\rangle
    $$
    Note that we have that  $\rho(\ket{\underline{m}}) = e^{\theta|\sum m_i - \lambda|}    \langle \underline{m} \mid \psi_{\lambda} \rangle $. We prove that $\rho$ is uniformly bounded, by some $C(\theta)$, independent of $\lambda$.\\
Since $\lambda$ is fixed, we omit the subscript of $\psi_{\lambda}$. Define, for readability,
$$
(H_0-\lambda)_{T,T} = P_T(H_0-\lambda) P_T,\,\,\,\,\,   V_{T,T} = P_TVP_T,\,\,\, V_{I,T} = P_IV P_T,\,\,\, \psi_I = P_I \psi , \,\, \text{etc.}
$$
Hence we get that
$$
\psi_T =  - ((H_0 - \lambda)_{T,T} + V_{T,T})^{-1} V_{T,I}  \psi_I
$$

Note that for $\xi \in D(W)$, since $R > \| W^{-1} V_{T,T} W  \|$ and that $[W,H_0]=0$:
\begin{align}
    W^{-1} (  (H_0 - \lambda )_{T,T}  + V_{T,T} )^{-1} W \xi &= W^{-1}\sum_{n=0}^\infty  (  (H_0 - \lambda  )_{T,T}^{-1}  V_{T,T} )^n  (H_0-\lambda )_{T,T}^{-1} W \xi\\
    &=  \sum_{n=0}^\infty  \left(  (H_0 - \lambda  )_{T,T}^{-1}   (W^{-1}V_{T,T}W ) \right)^n  (H_0-\lambda )_{T,T}^{-1})  \xi\\
    &=  \left(  (H_0 - \lambda )_{T,T}+ \overline{W^{-1} V_{T,T} W} \right)^{-1}  \xi \label{eq: Dualityof}
\end{align}
In particular, we have that 
\begin{equation*}
\|W^{-1} (  (H_0 - \lambda )_{T,T}  + V_{T,T} )^{-1} W\| \leq \frac{1}{R - \|\overline{W^{-1}V_{T,T}W}\|} \leq 2    
\end{equation*}
Indeed, one sees that,
    \begin{align*}
        |\langle W \xi \mid \psi_T\rangle| &= |\langle W\xi \mid((H_0 - \lambda)_{T,T} + V_{T,T})^{-1} V_{T,I}  \psi_I\rangle|\\
        &= |\langle V_{T,I}^*((H_0 - \lambda)_{T,T} + V_{T,T})^{-1} W\xi \mid  \psi_I)\rangle|\\
        &= |\langle W^{-1}V_{I,T}((H_0 - \lambda)_{T,T} + V_{T,T})^{-1} W\xi \mid W \psi_I)\rangle|\\
        &= |\langle \overline{W^{-1}V_{I,T}W}((H_0 - \lambda)_{T,T} + \overline{W^{-1}V_{T,T}W})^{-1} \xi \mid W \psi_I\rangle|\\
        &\leq 2 \|\overline{W^{-1}V_{I,T}W}\|  \| \xi\| \| W \psi_I \|\\
        &\leq  C e^{e^{6\theta}} e^{\theta R} \|\xi\|
    \end{align*}
where we first used \cref{eq: Resolvent_identitys}, in the second line the definition of the adjoint, in the third line that $W$ is self adjoint and restrict to a bounded operator, of norm $e^{\theta R}$, on $\operatorname{range}(P_I)$. Then we used \cref{eq: Dualityof}, after which we used the previous line.\\
    Since we know that $\psi = \psi_T + \psi_I$,  we get that $|\rho(\xi )|\leq (C e^{e^{3\theta}}+1) e^{R\theta}\|\xi\|$ is uniformly bounded, in particular $\rho(\ket{\underline{m}}) = e^{\theta|\sum m_i - \lambda|} |\langle \psi_{\lambda} \mid \underline{m}\rangle| $ is uniformly bounded. Since $R$ was not dependent on $\lambda$, we get that $C(\theta)$ is independent of $\lambda$.
\end{proof} 
From the above theorem we can deduce that:
\begin{corollary} \label{cor: sum decay}
    Let $\theta > 0$, then there is $C(\theta)$ such that for any eigenvalue/vector pair $(\lambda,\psi_{\lambda})$ we have: 
    $$|\langle \psi_{\lambda} \mid e_{j_1},\dots,e_{j_N}\rangle| \leq C(\theta) \exp(-\theta |\sum_ij_i -\lambda|)$$
\end{corollary}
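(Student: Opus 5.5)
We will deduce this from the preceding theorem by expanding the position-basis coefficient in the Stark basis. For each particle we have $e_j = \sum_{m\in\Z} \langle \xi^{(m)}\mid e_j\rangle\, \xi^{(m)} = \sum_m \mathcal{J}_{m-j}\,\ket{m}$ (the Bessel functions are real). Tensorizing gives
$$
\langle \psi_\lambda \mid e_{j_1},\dots,e_{j_N}\rangle = \sum_{\underline{m}\in\Z^N} \Big(\prod_{i=1}^N \mathcal{J}_{m_i-j_i}\Big)\langle \psi_\lambda\mid \underline{m}\rangle .
$$
This sum converges absolutely, because $\langle\psi_\lambda\mid\underline m\rangle$ is bounded and $\sum_n|\mathcal{J}_n|<\infty$ by \eqref{eq: Bessel Sum}. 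The expansion is therefore justified; no $\ell^2$ subtleties arise.

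Next we insert the preceding theorem with a larger rate $\theta' = \theta+1$, which gives $|\langle \psi_\lambda\mid\underline m\rangle|\le C(\theta')e^{-\theta'|\sum m_i-\lambda|}$. By the triangle inequality,
$$
\Big|\sum_i j_i-\lambda\Big| \le \Big|\sum_i m_i-\lambda\Big| + \sum_i |m_i-j_i| .
$$
Hence $e^{-\theta'|\sum m_i-\lambda|}\le e^{-\theta|\sum m_i-\lambda|} \le e^{-\theta|\sum j_i-\lambda|}\, e^{\theta\sum_i|m_i-j_i|}$. Actually $\theta'=\theta$ suffices here. Substituting this bound yields
$$
|\langle \psi_\lambda \mid e_{j_1},\dots,e_{j_N}\rangle| \le C(\theta)\, e^{-\theta|\sum j_i-\lambda|}\Big(\sum_{n\in\Z} |\mathcal{J}_n(g/h)|\, e^{\theta|n|}\Big)^N .
$$

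The only point requiring care is the finiteness of the last factor. It follows from the simple upper bound \eqref{eq: Bessel UB}: $\sum_n |\mathcal{J}_n|e^{\theta|n|}\le \sum_n \frac{(|g/h|e^\theta/2)^{|n|}}{|n|!}\le 2\exp(|g/h|e^{\theta}/2)$. This depends only on $\theta$ and $g/h$, not on $\lambda$ or $j$. The resulting constant $C(\theta)\,(2\exp(|g/h|e^\theta/2))^N$ is therefore uniform over all eigenpairs, as claimed. I do not expect a genuine obstacle. The one thing to track is that the uniformity in $\lambda$ inherited from the previous theorem is not spoiled, and it is not, since the Bessel sum is $\lambda$-independent.
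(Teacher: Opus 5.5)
Your proposal is correct and carries out exactly the ``simple computation'' the paper alludes to: expand $e_{\underline j}$ in the Stark basis using $\langle \underline m \mid e_{\underline j}\rangle=\prod_i \mathcal{J}_{m_i-j_i}$, apply the preceding theorem, move the weight onto $\sum_i|m_i-j_i|$ by the triangle inequality, and bound $\sum_n |\mathcal{J}_n|e^{\theta|n|}$ via \eqref{eq: Bessel UB}. The resulting constant $C(\theta)\,(2\exp(|g/h|e^{\theta}/2))^N$ is indeed independent of $\lambda$, as the statement requires.
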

\begin{proof}

Since $\langle \underline{m}, e_{\underline{j}}\rangle = \mathcal{J}_{m_1-j_1} \mathcal{J}_{m_2-j_2} \dots \mathcal{J}_{m_N-j_N}$, this follows by a simple computation using \eqref{eq: Bessel UB}.
\end{proof}

\section{Decay in all $|m_i|$ direction} \label{sec: decay all dir}
In this section, we will prove \cref{thm: superexp loc}. Take $\theta \in \C$ with $\operatorname{Re}(\theta)>0$, and define 
$$
W_{i, \theta} \ket{\underline{m}} =   e^{\theta |m_i| } \ket{\underline{m}} 
$$
as an unbounded, self adjoint operator, on $D(W_{i,\theta})$. We prove \cref{thm: superexp loc} by proving that for an eigenvectors $\psi_\lambda$ that $\psi_\lambda \in D(W_{i,\theta})$. Of course, for us the main question is for $\theta \in \R^+$, but we will need complex $\theta$ do do analytic perturbation theory. \\
For the remainder of the section, fix $i$. Unless specifically required otherwise, we will drop that $i$-dependence, and just write in $W_{\theta}$.\\
Note that $[W_{\theta},H_0] = 0$.  

\begin{lemma}\label{lem: holom 1}
    Fix $1\leq i \leq N$. For every $\theta \in \C$ with $\operatorname{Re}(\theta) >0$ we have that: 
    \begin{enumerate} 
        \item $W^{-1}_{\theta} V W_{\theta}$ has bounded closure,
        \item $W^{-1}_{\theta} V W_{\theta}$ is continuous in $\theta$, that is 
        $$
        \|W^{-1}_{\theta+h} V W_{\theta+h }-W^{-1}_{\theta} V W_{\theta}\| \xrightarrow[|h| \to 0]{} 0
        $$
        \item Let $|m_i|$ denotes the multiplication operator by $|m_i|$, and $[\cdot,\cdot]$ denotes the commutator. Then $W_{\theta}^{-1}[|m_i|,V]W_{\theta}$ has a bounded closure, 
        \item Finally  we have that $W_{\theta}^{-1} V W_{\theta}$ is holomorphic in $\theta$, with derivative $W^{-1}_{\theta}[|m_1|,V_{12}]W_{\theta}$.
    \end{enumerate}
\end{lemma}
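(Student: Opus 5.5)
The plan is to work entirely with matrix elements in the $\ket{\underline{m}}$ basis and use Schur's test, exactly as in \cref{I(z)_is_compact} and the lemma of Section \ref{sec: decay M}. Since $V=\sum_\alpha V_\alpha$ is a finite sum, it suffices to treat each $V_\alpha$. If $i\notin\alpha$, then $V_\alpha$ does not change $m_i$, so $V_\alpha$ commutes with $W_\theta$ and with $|m_i|$. All four claims are then trivial: the conjugated operator is $V_\alpha$ itself and the commutator vanishes. So we may take $\alpha=(1,2)$ and $i=1$. The conjugated operator has matrix elements
$$\bra{\underline n}W_\theta^{-1}V_{12}W_\theta\ket{\underline m}=e^{\theta(|m_1|-|n_1|)}\bra{n_1,n_2}V^{(2)}\ket{m_1,m_2}\,\delta_{n_3m_3}\cdots\delta_{n_Nm_N}.$$
Moreover, $\big||m_1|-|n_1|\big|\le|m_1-n_1|$, so the weight is bounded by $e^{\operatorname{Re}\theta\,|m_1-n_1|}$.

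\textbf{Key bound.} By Schur's test it suffices to bound $\sup_{m_1,m_2}\sum_{n_1,n_2}e^{\operatorname{Re}\theta|m_1-n_1|}|V^{(2)}_{n_1,n_2;m_1,m_2}|$ and the transposed quantity. Proceeding as in \eqref{eq: n1-m1>R}, we bound $|v|\le\|v\|_\infty$. Summing over $n_2$ and $j_2$ with \eqref{eq: Bessel Sum}, and then over $j_1$ with \eqref{eq: Bessel decay 1}, gives a bound of the form $C\sum_{r}e^{(c+\operatorname{Re}\theta)|r|}/|r/2|!<\infty$, uniformly for $\operatorname{Re}\theta$ in compact sets. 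This proves item 1.

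For item 2, we write the difference of the conjugated operators as the matrix with entries $(e^{(\theta+h)d}-e^{\theta d})V_{nm}$, where $d=|m_1|-|n_1|$. We then use $|e^{hd}-1|\le|h||d|e^{|h||d|}$, and the extra polynomial factor $|d|\le|m_1-n_1|$ is absorbed by the factorial decay. The same Schur bound then yields $O(|h|)$.

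For item 3, the commutator $[|m_1|,V]$ has entries $(|n_1|-|m_1|)V_{nm}$, again bounded by $|n_1-m_1|\,|V_{nm}|$, so the same argument applies. (Only the $V_{12}$ and $V_{1j}$ terms contribute; the statement's ``$[|m_1|,V_{12}]$'' should be read as $[|m_i|,V]$.)

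For item 4, holomorphy follows by showing that the difference quotient converges in norm. The entrywise remainder is $\big((e^{hd}-1)/h-d\big)e^{\theta d}V_{nm}$, which is bounded by $|h|\,|d|^2e^{(\operatorname{Re}\theta+|h|)|d|}|V_{nm}|$. Schur's test with the factorial decay again gives norm $O(|h|)$, and the limit is the conjugated commutator from item 3 (with the appropriate sign convention $W_\theta^{-1}[V,|m_1|]W_\theta$ versus $[|m_1|,V]$, to be fixed by computing $\partial_\theta e^{\theta(|m_1|-|n_1|)}$).

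\textbf{Main obstacle.} Everything hinges on a single estimate: summability of $|m_1-n_1|^k e^{s|m_1-n_1|}|V^{(2)}_{n_1,n_2;m_1,m_2}|$ over $(n_1,n_2)$, uniformly in $(m_1,m_2)$, and the same over $(m_1,m_2)$ uniformly in $(n_1,n_2)$. The subtle point is that the decay of $V^{(2)}$ in $n_2-m_2$ must also be used to make the $n_2$-sum converge. Bounding $\sum_{n_2}|\mathcal J_{n_2-j_2}|$ by a constant and then summing $j_2$ against $|\mathcal J_{m_2-j_2}|$ avoids any need for decay of $v$ itself.
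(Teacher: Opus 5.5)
Your proposal is correct and follows the paper's proof essentially step for step: you reduce to $V_{12}$ with $i=1$, apply Schur's test to the weighted matrix elements, sum out $n_2,j_2$ with \eqref{eq: Bessel Sum} and $j_1$ with \eqref{eq: Bessel decay 1}, and absorb the extra factors for items 2--4 into the factorial decay. The paper instead dominates those factors by $C(\theta)(e^{\theta d/2}+e^{-\theta d/2})$ and reuses the item-1 bound at $\theta/2$ and $3\theta/2$, which is equivalent; your remark that the derivative is really $W_\theta^{-1}[V,|m_i|]W_\theta$ (general index, opposite sign to the statement) is also correct.
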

\begin{proof}
     We may assume that $\operatorname{Im}(\theta)=0$. In case that $\operatorname{Im}(\theta)\neq 0$, all the above operators will be conjugated with a unitary operator which does not change the norm. Note that, by linearity and similarity, it is sufficient to prove this for $V_{(1,2)}$. Since if $i \notin \{1,2\}$ we have that $W_{\theta}$ commutes with $V_{(1,2)}$. By similarity of $m_1$ and $m_2$ we may assume $i = 1$. Furthermore, we may assume we are in the two-particle case.\\
 Then again appealing to Schur's test, to prove Item 1, it is sufficient to prove that the following expression is bounded: 
 \begin{align}
&\sup_{m_1,m_2}\sum_{n_1,n_2} |\bra{m_1,m_2} W_{1,\theta}^{-1}V_{12}W_{1,\theta}\ket{n_1,n_2}|  =  \nonumber \\
&\sup_{m_1,m_2}
\sum_{n_1,n_2} e^{\theta(|n_1|-|m_1|)} |\bra{m_1,m_2}V_{12}\ket{n_1,n_2}| =: \sup_{m_1,m_2} \sum_{n_1,n_2} I_{m_1,m_2}^{n_1,n_2}(\theta) =: \sup_{m_1,m_2} I_{m_1,m_2}(\theta). \label{eq: I def}
\end{align}
Similarly, to prove Items $2,3,4$ one needs to prove that 
$$
\sup_{m_1,m_2} \sum_{n_1,n_2} I_{m_1,m_2}^{n_1,n_2}(\theta) \varphi_k=: \sup_{m_1,m_2} I^k_{m_1,m_2}(\theta)
$$
is bounded, with
$$
\varphi_k = \begin{cases}
    |e^{h(|n_1| - |m_1|)}-1| & \text{if } k=2\\
    ||n_1|-|m_1|| & \text{if } k = 3\\
    \left|\frac{\exp(h (|n_1|-|m_1|) -1}{h} -( |n_1|-|m_1|)\right| & \text{if } k=4
\end{cases}
$$
where $k$ corresponds to the item number. 
\textit{Item 1}) We observe that $I_{m_1,m_2}$ is bounded uniformly in $m_1,m_2$. Fix $m_1,m_2$, then we have: 
\begin{align}
    I_{m_1,m_2}(\theta) &\leq \sum_{n_1,n_2} \sum_{j_1,j_2} e^{\theta(|n_1|-|m_1|)} |v(j_1-j_2)| \mathcal{J}_{n_1-j_1}| |\mathcal{J}_{m_1-j_1}|
    |\mathcal{J}_{n_2-j_2}||\mathcal{J}_{m_2-j_2}|\nonumber \\
    &\leq  C \sum_{n_1,j_1} e^{\theta(|n_1|-|m_1|)} |\mathcal{J}_{n_1-j_1}||\mathcal{J}_{m_1-j_1}| \leq C \sum_{n_1} \frac{e^{(c+\theta)|n_1-m_1|}}{|(n_1-m_1)/2|!} \leq C(\theta), \label{eq: I bound 1}
\end{align}
uniformly in $m_1,m_2$ where first line is the definition, in the second line we summed over $n_2$, used \eqref{eq: Bessel Sum}, then bounded $V$ by $\|v \|_{\infty}$, and then we summed over $j_2$ and used \eqref{eq: Bessel Sum} again. In the last bound we used \eqref{eq: Bessel decay 1} upon summing over $j_1$, and we used the fact that $||n_1|-|m_1|| \leq |n_1-m_1||$. The last bound is deduced from Taylor expansion of the exponential function (it can be bounded by $C(e^{c+\theta}+1)e^{e^{c+\theta}}$). \\
\textit{Item 2}) Now for Item 2, we have $\varphi_2=
    |\exp(h(|n_1|-|m_1|))-1|$. We use the basic bounds $|e^t-1| \leq |t|e^{|t|}$ for any $t \in \mathbb{C}$, and $|x| \times e^{|hx|} \leq C(\theta)(\exp(\theta x/2)+\exp(-\theta x/2)) $ for any $x \in \mathbb{R}$ and $|h|$ sufficiently small. From these two bounds we deduce that $I^{2}_{m_1,m_2}(\theta) \leq |h|C(\theta)(I_{m_1m_2}(\theta/2)+I_{m_1,m_2}(3\theta/2))$. Combining this with \eqref{eq: I bound 1}, namely the fact that for any $\theta$ that $I_{m_1,m_2}(\theta)$ is bounded uniformly in $m_1,m_2$ by some constant $C(\theta)$, finishes the proof of Item 2.\\
\textit{Item 3}) notice that $\varphi_3= ||n_1|-|m_1||$. In this case using the general bound
    $|t| \leq C(\theta)(\exp(\theta t/2)+\exp(-\theta t/2))$ for any $t \in \mathbb{R}$, we deduce that $I^3_{m_1,m_2}(\theta) \leq C(\theta)(I_{m_1,m_2}(\theta/2)+I_{m_1,m_2}(3\theta/2))$ which finishes the proof thanks to \eqref{eq: I bound 1} (similar to Item 2).\\
    \textit{Item 4}) We observe that $\varphi_4=|\frac{\exp(h(|n_1|-|m_1|))-1)}{h}-(|n_1|-|m_1|)|$. In this case, we use the following general facts that for any $x \in \mathbb{R}$ and $h \in \mathbb{C}$ with $|h|$ sufficiently small: 
    $$\left|(e^{hx}-1)/h -x \right| \leq \frac{|h|x^2}{2}e^{|h||x|} \leq |h| C(\theta)  (e^{\theta x/2}+e^{-\theta x/2}).$$ 
    Having the above bound, we observe that $I_{m_1,m_2}^4 \leq |h|C(\theta) (I_{m_1,m_2}(\theta/2)+I_{m_1,m_2}(3\theta/2))$. Similar to Item 1, this finishes the proof of Item 4 appealing to \eqref{eq: I bound 1}.
\end{proof}

When we write $W_\theta^{-1} H^{(N)} W_\theta$, we will mean the operator $ H_0^{(N)} + W_\theta^{-1} V W_\theta$ with domain $D(H_0)$. Of course, both agree on $D(H_0^{(N)}) \cap D(W_\theta^{-1} H^{(N)} W_\theta)$. The latter has larger domain, in particular it is a closed operator.\\

%


    
\begin{lemma}\label{compact_semisimilar}
    Let $K$ be a compact operator on a Hilbert space $\mathcal{H}$. Let $W: D(W) \to \mathcal H$ be an bijective closed operator. So we know that $W^{-1}$ is bounded and injective with dense range. Assume that $W^{-1}KW$ has a compact bounded closure $B$. Then $\sigma(B) = \sigma(K)$. 
\end{lemma}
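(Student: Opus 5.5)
Since both $K$ and $B$ are compact, their spectra consist of $0$ together with nonzero eigenvalues of finite multiplicity; in infinite dimensions $0$ lies in both spectra automatically. The statement therefore reduces to showing that $K$ and $B$ have the same nonzero eigenvalues, and I would prove the two inclusions separately.

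\emph{Step 1: $\sigma(B)\setminus\{0\}\subseteq\sigma(K)$.} Let $\mu\neq0$ with $B\phi=\mu\phi$. By hypothesis $B$ is the closure of $W^{-1}KW$ on $D(W)$. Since $W^{-1}$ is bounded, $KW\xi = W(W^{-1}KW\xi)$ for $\xi\in D(W)$. I would first establish the intertwining relation $W^{-1}K = BW^{-1}$ on all of $\mathcal H$. For $\eta\in\mathcal H$, set $\xi=W^{-1}\eta\in D(W)$; then $BW^{-1}\eta = W^{-1}KW\xi = W^{-1}K\eta$. So $W^{-1}K = BW^{-1}$ holds everywhere, with both sides bounded. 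Taking adjoints gives $K^*(W^{-1})^* = (W^{-1})^*B^*$. Since $W^{-1}$ has dense range, $(W^{-1})^*$ is injective. If $B^*\chi=\bar\mu\chi$ with $\chi\neq0$, then $K^*\big((W^{-1})^*\chi\big)=\bar\mu\,(W^{-1})^*\chi$ with $(W^{-1})^*\chi\neq0$, so $\bar\mu\in\sigma(K^*)$ and hence $\mu\in\sigma(K)$. Because $B$ is compact, $\mu\in\sigma(B)\setminus\{0\}$ implies $\bar\mu$ is an eigenvalue of $B^*$ (Fredholm alternative), which closes this direction.

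\emph{Step 2: $\sigma(K)\setminus\{0\}\subseteq\sigma(B)$.} Let $K\psi=\mu\psi$ with $\mu\neq0$. The intertwining relation gives $BW^{-1}\psi = W^{-1}K\psi = \mu W^{-1}\psi$. Since $W^{-1}$ is injective, $W^{-1}\psi\neq0$, so $\mu$ is an eigenvalue of $B$. For a compact $K$, every nonzero spectral point is an eigenvalue, so this finishes the proof.

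\emph{Expected obstacle.} The only delicate point is the identity $W^{-1}K = BW^{-1}$ on the whole space, since $B$ is defined as a closure. It follows, however, directly from the fact that $W^{-1}$ maps $\mathcal H$ onto $D(W)$, where $B$ coincides with $W^{-1}KW$. This is the step I would check carefully, including that the closure is well defined. With this identity in hand, Step 2 is immediate and Step 1 is its adjoint version. Notably, in Step 1 compactness of $B$ is used only to pass from spectral points to eigenvalues.
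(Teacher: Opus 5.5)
Your proposal is correct and follows the paper's proof essentially step for step. Both use the intertwining relation $W^{-1}K = BW^{-1}$, push eigenvectors of $K$ forward through the injective $W^{-1}$, and handle the converse inclusion by taking adjoints with the injective $(W^{-1})^*$. You also justify the intertwining identity explicitly via $W^{-1}\mathcal H = D(W)$, a step the paper only asserts.
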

\begin{proof}
    We know that $W^{-1} K = B W^{-1}$ as bounded operators. Since $0 \in \sigma(B)$ and $0 \in \sigma(K)$ for all compact operators, we may look at $z \in \sigma(K) \setminus \{0\}$. By the theory of compact operators, we know that $z$ is an eigenvalue. Hence let $\phi$ such that $K \phi = z\phi$. By injectivity we know that $W^{-1} \phi \neq 0$. Then one checks that 
    $$
    B (W^{-1} \phi) = W^{-1} K \phi = z(W^{-1}\phi) 
    $$
    hence $z \in \sigma(B)$. Similarly, if $z \in \sigma(B)\setminus\{0\}$, we have that $\overline{z} \in \sigma(B^*)\setminus\{0\}$. Since $B^*$ is again compact, one finds an eigenvector $\phi$ such that $B^* \phi = \overline{z}\phi$. One checks that $(W^{-1})^*$ is again injective with dense range, hence $(W^{-1})^*\phi$ is non-zero. One sees that
    $$
    K^* ((W^{-1})^*\phi) = (W^{-1})^* B^*\phi = \overline{z} ((W^{-1})^*\phi)
    $$
    so $\overline{z} \in \sigma(K^*)$, hence $z \in \sigma(K)$.\\
\end{proof}
\begin{remark}
    The next proof of the next theorem will make extensive use of the equality $W_{\theta}^{-1} (z-A)^{-1} W_{\theta} = (z- W_\theta^{-1} A W_\theta)^{-1}$, for different operators $A$. This equality can be checked to holds on $D(W_\theta)$, by checking that the left hand side is indeed an inverse of the right hand side. Since $D(W_\theta)$ is dense, the following equivalence follows: $\sigma(A) = \sigma(W_{\theta}^{-1} A W_\theta) $ if and only if $W_{\theta}^{-1} (z-A)^{-1} W_{\theta}$ has closure for $z \in \C \setminus \sigma(A)$.
\end{remark}

\begin{theorem} \label{lem:technical W}
Fix $1\leq i \leq N$. Let $\theta \in \C$ with $\operatorname{Re}(\theta) \geq 0$. Then the following are true: 
\begin{enumerate}
        \item $W_{\theta}^{-1}G_{D_k}(z)W_{\theta}$ has a bounded extension for all non-trivial cluster decompositions $D_k = \{C_i\}_{1\leq i\leq k}$ and $z \in  \C \setminus (\sigma(H^{|C_1|}) + \sigma(H^{|C_2|}) +  \dots + \sigma(H^{|C_k|}))$,
        \item $W_{\theta}^{-1}D(z)W_{\theta}$ and $ W_{\theta}^{-1} I(z)W_{\theta}$ have a bounded extension for all $z \in  \C \setminus \sigma_{\mathrm{cluster}}^{(N)}$,
        \item $W_{\theta}^{-1} I(z)W_{\theta}$ is compact for every $z\in \Dom(I) = \C \setminus \sigma_{\mathrm{cluster}}^{(N)}$,
        \item $W_{\theta}^{-1}G(z)W_{\theta}$ has a bounded extension for all $z \in \Dom(G) = \C \setminus \sigma(H^{(N)})$.
    \end{enumerate}
\end{theorem}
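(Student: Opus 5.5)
The proof will be an induction on the number of particles $N$, organized around the statement that \emph{conjugation by $W_\theta$ does not change the spectrum}: for each relevant Hamiltonian $A$ (namely $H^{(N)}$ and the cluster Hamiltonians $H_{D_k}$), I will show $\sigma(W_\theta^{-1}AW_\theta)=\sigma(A)$. By the Remark preceding the theorem, this is equivalent to $W_\theta^{-1}(z-A)^{-1}W_\theta$ having a bounded closure for $z\notin\sigma(A)$, and this closure is then $(z-W_\theta^{-1}AW_\theta)^{-1}$. The induction hypothesis for fewer than $N$ particles will be item 4 (bounded extension of the conjugated resolvent off the spectrum), for every $\theta$ with $\operatorname{Re}\theta\ge 0$.

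For item 1, I write $H_{D_k}$ as a sum of commuting pieces $H^{(|C_j|)}$, each acting on the tensor factor of its cluster. The weight $W_{i,\theta}$ acts only on the factor containing particle $i$, say cluster $C_1$. Then $W_\theta^{-1}H_{D_k}W_\theta = \widetilde H^{(|C_1|)}\otimes 1 + 1\otimes(\text{rest})$, where $\widetilde H^{(|C_1|)}=W_\theta^{-1}H^{(|C_1|)}W_\theta$. Since $D_k$ is nontrivial, $|C_1|<N$, so by induction $\sigma(\widetilde H^{(|C_1|)})=\sigma(H^{(|C_1|)})$ with a resolvent bound. The remaining factor is self-adjoint. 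To control the resolvent of the sum, I represent it as a spectral integral over the self-adjoint part,
\[
(z-\widetilde A\otimes1-1\otimes B)^{-1}=\int (z-\mu-\widetilde A)^{-1}\,dE_B(\mu).
\]
This requires $\|(w-\widetilde A)^{-1}\|$ to be uniformly bounded for $w$ at a fixed distance from $\sigma(\widetilde A)$. That uniformity does not come for free for a non-normal operator. I obtain it from the periodicity $UHU^*=H-2hN$ of \cref{Domain_of_I,D}, which commutes suitably with $W_\theta$ up to a bounded factor that I must check, combined with the large-$|\operatorname{Im} w|$ Neumann series bound $\|(w-\widetilde A)^{-1}\|\le (|\operatorname{Im}w|-\|W^{-1}VW\|)^{-1}$. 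Together these reduce the uniformity to a compact set of $w$. The conclusion is that the spectrum of the conjugated cluster operator is contained in the Minkowski sum, which is closed by \cref{Domain_of_I,D}.

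Items 2 and 3 then follow by conjugating each term of the finite expressions for $D(z)$ and $I(z)$: $W_\theta^{-1}G_{D_i}W_\theta$ is bounded by item 1, and $W_\theta^{-1}V_{D_{i+1},D_i}W_\theta$ is bounded by \cref{lem: holom 1}. For compactness in item 3, I repeat the argument of \cref{I(z)_is_compact}. For $|\operatorname{Im}z|>M$, the conjugated Neumann series $\sum G_0(W^{-1}V_{\alpha_1}W)G_0\cdots$ converges once $M$ exceeds $\|W^{-1}VW\|$. Each term is compact: truncating the conjugated $V_\alpha$ as in \eqref{def: V_R} gives finite rank, and the truncation error tends to zero by the same Schur estimates with the extra factor $e^{\theta(|n_1|-|m_1|)}$, which is absorbed by the superexponential Bessel decay as in \eqref{eq: I bound 1}. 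Holomorphy of the conjugated $I(z)$ in $z$ then extends compactness to all of $\C\setminus\sigma^{(N)}_{\mathrm{cluster}}$ via the Reed–Simon lemma.

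For item 4, I conjugate the functional equation \eqref{functional_equation}, obtaining $\widetilde G=\widetilde D+\widetilde I\,\widetilde G$. By \cref{compact_semisimilar}, $\widetilde I(z)$ and $I(z)$ have the same spectrum, so $1-\widetilde I(z)$ is invertible exactly when $1-I(z)$ is. Hence for $z\notin\sigma(H^{(N)})\cup\sigma_{\mathrm{cluster}}^{(N)}$ the conjugated resolvent $(1-\widetilde I)^{-1}\widetilde D$ is bounded, at least away from the discrete set where $1-I(z)$ fails to be invertible; at points of that set that lie in the resolvent set of $H^{(N)}$, I use a Riesz-projection or holomorphy argument. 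The main obstacle is the points $z\in\sigma^{(N)}_{\mathrm{cluster}}\setminus\sigma(H^{(N)})$, where the functional equation gives nothing. For these I plan to use the first resolvent identity in the form $\widetilde G(z)=\widetilde G(z')\bigl(1+(z'-z)\widetilde G(z)\bigr)$ with $z'$ nearby. That identity presupposes the conclusion, so I expect to need a different route: exploit that $\theta\mapsto W_\theta^{-1}H W_\theta$ is a holomorphic family of type (A) by \cref{lem: holom 1}. Since the family is unitarily equivalent for purely imaginary $\theta$, the spectrum is locally constant in $\theta$ on the resolvent set, and I will propagate from $\theta=0$ along $\operatorname{Re}\theta$ in small steps. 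Making this propagation argument rigorous is the part I expect to be hardest.
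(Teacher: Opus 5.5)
Items 1--3 follow the paper's proof essentially step by step: tensor splitting with periodicity and the large-$|\operatorname{Im} w|$ Neumann bound, termwise conjugation of the finite expansions, and truncation plus Schur estimates. Your remark that $U$ intertwines $W_\theta$ only up to a bounded diagonal factor with entries $e^{\theta(|m_i+1|-|m_i|)}$ is correct. It is slightly more careful than the paper's claim that the conjugated resolvent norm is exactly $2hj$-periodic.

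The gap is in item 4, at points $z_0\in\sigma^{(N)}_{\mathrm{cluster}}\setminus\sigma(H^{(N)})$. You leave this case open, and the propagation you sketch would not close it. For the family $\theta\mapsto H_0+\overline{W_\theta^{-1}VW_\theta}$ one only gets that the resolvent set is open in $(\theta,z)$, i.e.\ upper semicontinuity of the spectrum. Unitary equivalence along $\operatorname{Im}\theta$ gives no control in the $\operatorname{Re}\theta$ direction, and non-unitary analytic deformations can move essential spectrum (complex dilation is the standard example). So nothing in your argument prevents spectrum of the conjugated operator from appearing at $z_0$ as $\operatorname{Re}\theta$ grows. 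The functional equation confines that spectrum to $\sigma^{(N)}_{\mathrm{cluster}}\cup D_0$. A Riesz or Laurent argument then needs $z_0$ to be isolated in this set, which need not hold for $z_0\in\sigma^{(N)}_{\mathrm{cluster}}$; there $I(z_0)$ and $D(z_0)$ need not even exist.

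The missing idea is that this case is empty: $\sigma^{(N)}_{\mathrm{cluster}}\subseteq\sigma(H^{(N)})$. This is the HVZ-type inclusion behind the Remark after the proof of Theorem~\ref{thm: spectral localization}. By induction on $N$ it suffices to treat two clusters of sizes $j$ and $N-j$.
Take eigenvectors $H^{(j)}\phi_1=\lambda_1\phi_1$ and $H^{(N-j)}\phi_2=\lambda_2\phi_2$; eigenvalues are dense in these spectra by Theorem~\ref{thm: spectral localization}. Set $\psi_k=U_1^{-(N-j)k}\phi_1\otimes U_2^{jk}\phi_2$, with $U_1,U_2$ the shifts \eqref{eq: shift unitary} acting on the two factors.
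The Stark shifts cancel, so $H_D\psi_k=(\lambda_1+\lambda_2)\psi_k$, while every inter-cluster difference $m_a-m_b$ is shifted by $-Nk$. The Schur bound from the proof of Lemma~\ref{I(z)_is_compact} gives $\|V_\alpha\mathbbm{1}_{\{|m_a-m_b|>L\}}\|^2\le C\sup_{|r|>L}f(r)$, with $f$ from \eqref{eq: Bessel decay with V}. Splitting $\psi_k$ according to whether $|m_a-m_b|\le L$ then gives $\|(H^{(N)}-H_D)\psi_k\|\to0$.

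With this inclusion, every $z_0\notin\sigma(H^{(N)})$ lies outside $\sigma^{(N)}_{\mathrm{cluster}}$. So either $1-I(z_0)$ is invertible, or $z_0$ is an isolated point of $D_0$, and only there must your ``Riesz-projection or holomorphy'' step be made concrete. The paper shows that the Laurent coefficients of $W_\theta^{-1}G(z)W_\theta$ at $z_0$ are analytic in $\theta$ and vanish for small $\theta$.
A shorter route works at fixed $\theta$. For $\xi\in D(W_\theta)$ the map $z\mapsto W_\theta^{-1}G(z)W_\theta\xi$ is holomorphic at $z_0$. Hence the negative Laurent coefficients of $(1-\widetilde I(z))^{-1}\widetilde D(z)$ at $z_0$ annihilate the dense set $D(W_\theta)$, so they vanish.

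Note that the paper's step d) simply asserts that every such $z_0$ is isolated in $\sigma^{(N)}_{\mathrm{cluster}}\cup D_0$. For $z_0\in\sigma^{(N)}_{\mathrm{cluster}}$ this is legitimate only because, by the inclusion above, no such points exist.
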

\begin{proof}
    We prove this theorem by induction on $N$.The base step of $N=2$ is easy. We proceed to the induction step. Up until Item 4, we may assume $\theta$ real, since the complex part corresponds to a unitary.\\ 
    a) One sees that $W_{\theta}^{-1}G_{D_k}(z)W_{\theta} = (z- W^{-1}_{\theta} H_{D_k} W_{\theta} )^{-1}$ on the domain $W_{\theta}$. Hence it suffices to prove that $\sigma(H_{D_k}) = \sigma(W^{-1}_{\theta} H_{D_k} W_{\theta})$. Since the resolvent of $W^{-1}H_{D_k}W$ is bounded and well defined on the complement of the spectrum.\\
    Since $D_k$ is non-trivial, there is a $j$ with $1 \leq j <N$ such that  we may write $H_{D_k} = 1 \otimes H^{(j)}_{D_{k_1}} + H_{D_{k_2}}^{(N-j)}\otimes 1$. Depending on whether $i\leq j$ or $i>j$ ($i$ appeared in the definition of $W_{\theta}\equiv W_{i, \theta}$) $W_{\theta}$  acts non-trivially on only one of the above terms in the sum.  Without loss of generality, assume $i\leq j$ i.e. 
    $$ 
        W_{\theta}^{-1}H_{D_k}W_{\theta} = 1 \otimes (W_{\theta}^{-1}H_{D_{k_1}}^{(j)}W_{\theta}) +  H_{D_{k_2}}^{(N-j)} \otimes 1.
    $$
    Then by induction, we know by a) or by d), that $\sigma(W_{\theta}^{-1}H^{(j)}_{D_{k_1}}W_{\theta}) = \sigma(H^{(j)}_{D_{k_1}})$. We note that for unbounded (non-selfadjoint) operators $A,B$, 
    it is not true that $\sigma(1 \otimes A + B \otimes 1) = \overline{\sigma(A) + \sigma(B)}$, for further discussion see \cite{Ichinose+1970+119+153, REED1973107}. We prove it for the case that one of them is selfadjoint.\\
    Take $z \notin \sigma(H_{D_k})$, hence $z \notin \sigma(H_{D_{k_1}}^{(j)}) + \sigma(H^{(N-j)}_{D_{k_2}})$. Then we write 
    $$
    A(z) = \sum_{\lambda \in \sigma(H^{(N-j)}_{D_{k_2}})}  P_{\lambda}\otimes ((z-\lambda) - W_{\theta}^{-1}H^{(j)}_{D_{k_1}}W_{\theta})^{-1 }, 
    $$
    where $P_{\lambda}$ is the spectral projection to the subspace corresponding to $\lambda$.
    It is easy to check that if $A(z)$ were to converge, then it would be an inverse of $z-W_{\theta}^{-1}H_{D_k}W_{\theta} $. So it rests us to check convergence.\\
    Since $P_{\lambda}$ are orthonormal projections, it suffices to prove that $\|((z-\lambda) - W_{\theta}^{-1}H^{(j)}_{D_{k_1}}W_{\theta})^{-1 }\|$ is uniformly bounded, independent of $\lambda$. \\
    This follows since $\operatorname{dist}(z-\lambda, \sigma(W_{\theta}^{-1} H_{D_{k_1}}^{(j)} W_{\theta})) \geq d>0$. Additionally, $\|(z- W_{\theta}^{-1} H_{D_{k_1}}^{(j)} W_{\theta})^{-1}\|$ is invariant under the transformation $z \mapsto z+ j\times2h$, and converges to $0$ when $|\operatorname{Im}(z)| \to \infty$. In particular,we get that $\|(z- W_{\theta}^{-1} H_{D_{k_1}}^{(j)} W_{\theta})^{-1}\|$ is uniformly bounded in $\lambda \in \sigma(H^{(N-j)}_{D_{k_2}})$, since  
    $$
    \{ z\in \C | |\operatorname{Im}(z)|\leq M,\operatorname{dist}(z,\sigma(H^{(j)}_{D_{k_1}}) \geq d, 0 \leq\operatorname{Re}(z)\leq j\times h\}
    $$ is compact.\\
    b). Since 
    $$
    D(z)  = \sum_{S  \text{ with } k_S\geq 2} G_{D_N} V_{D_N,D_{N-1}}G_{D_{N-1}} V_{D_{N-1},D_{N-2}} G_{D_{N-2}}\dots V_{D_{k+1},D_k}G_{D_k}
    $$
    and 
    $$
    I(z) = \sum_{S  \text{ with } k_S=1}  G_{D_N}(z) V_{D_N,D_{N-1}}G_{D_{N-1}}(z) V_{D_{N-1},D_{N-2}} G_{D_{N-2}}(z)\dots V_{D_{2},D_1}
    $$
    with $D_k$ never being the trivial cluster $(1,2,\dots,N)$. Hence this follows directly from a).\\
    c) Take $\operatorname{Im}(z) > \frac{1}{{N\choose 2}\|W_{\theta}^{-1} V_\alpha W_{\theta}\|}$. Then we have, similar to the proof of \cref{I(z)_is_compact}, together with the fact that $[G_0(z),W] = 0$, that 
    $$
    W_{\theta}^{-1}I(z) W_{\theta} = \sum_{(\alpha_1,\alpha_2,\dots \alpha_n)} G_0 (W_{\theta}^{-1}V_{\alpha_1}W_{\theta}) G_0 (W_{\theta}^{-1}V_{\alpha_2}W_{\theta})\dots (W_{\theta}^{-1}V_{\alpha_n}W_{\theta} )
    $$
    with the sum over all graphs that are connected, but such that $(\alpha_1,\alpha_2,\dots \alpha_{n-1})$ is not connected. \\
    The proof of the fact that $W_{\theta}^{-1}I(z)W_{\theta}$ is compact is very similar to Theorem \ref{I(z)_is_compact}, mainly since matrix elements of $V_{\alpha_i}$ have super-exponential decay and adding an exponential factor does not change the decay properties. Given the similarity with proof of Theorem \ref{I(z)_is_compact}, we only mention the general steps and main difference and omit the rest for the sake of brevity. \\
    As before, it is sufficient to prove that $L_{W}:=G_0 (W_{\theta}^{-1}V_{\alpha_1}W_{\theta}) G_0 (W_{\theta}^{-1}V_{\alpha_2}W_{\theta})\dots (W_{\theta}^{-1}V_{\alpha_n}W_{\theta} )$ is compact for a given sequence of $(\alpha_1,\dots,\alpha_n)$ appearing in the above sum. We may define $V_{\alpha_i}^R$ as in \eqref{def: V_R} and $P_R$ is also defined as before. Then in the same manner (since $[W_{\theta}^{-1},G_0]=0$) we can see that $L^R_{W}=P_RG_0(W_{\theta}^{-1}V^R_{\alpha_1}W_{\theta}) \dots G_0(W_{\theta}^{-1}V_{\alpha_n}W_{\theta})$ is also finite rank. The only remaining ingredient of the proof is the fact that $W_{\theta}^{-1}V_{\alpha_i}^RW_{\theta}$  converges in norm to $W_{\theta}^{-1}V_{\alpha_i}W_{\theta}$ as $R \to \infty$. At this point, it should not be a surprise to the reader that we will use Schur's test again for this.  Similar to Theorem \ref{I(z)_is_compact} the Schur's test boils down to showing that following expressions converges to zero uniformly in $m_1,m_2$ as $R \to \infty$: 
    $$
    \sum_{n_1,n_2}\sum_{j_1,j_2} |v(j_1-j_2)||\mathcal{J}_{n_1-j_1}\mathcal{J}_{m_1-j_1} \mathcal{J}_{n_2-j_2}\mathcal{J}_{m_2-j_2}| e^{\theta(|m_1|-|n_1|)}
    $$
    In three cases: either when $|m_1-m_2|>R$, or when the sum is over $n_1$ such that $|n_1-m_1|>R$, lastly when the sum is over $n_2$ such that $|n_2-m_2|>R$.\\
    Let us briefly mention why first two cases are identical to \eqref{eq: |m1-m2|>R} and \eqref{eq: n1-m1>R}. Notice that in mentioned bounds the argument of the Bessel function $x$ is omitted and constant in these bounds  depend on $x$.  Moreover,  to obtain \eqref{eq: |m1-m2|>R} and \eqref{eq: n1-m1>R} we first bound Bessel functions via \eqref{eq: Bessel UB} i.e. $|\mathcal{J}_{n}(x)|\leq |x/2|^{|n|}/|n|!$. Now it is sufficient to bound 
    $$e^{\theta(|m_1|-|n_1|)}|x|^{|n_1-j_1|} |x|^{|m_1-j_1|} \leq (|x|e^{\theta})^{|n_1-j_1|} (|x|e^{\theta})^{|m_1-j_1|}. $$
    This bound allow us to get the exact same bound as in \eqref{eq: n1-m1>R} and \eqref{eq: |m1-m2|>R} up to modifying the constants $C(|x|) , c(|x|)$ to $C(|x|e^{\theta}), c(|x|e^{\theta})$. \\
    Finally, notice that although at first sight the term where the first sum is over $|n_2-m_2|>R$ seems different than the previous case (due to the asymmetric exponential factor only depending on $n_1,m_1$), by a simple inspection it is direct that this term can also be treated with the above trick of modifying the constant and argument of the Bessel function. 

    d) We first prove that
    $$
    \sigma(W_{\theta}^{-1}H^{(N)}W_{\theta})\subset \sigma_{\mathrm{cluster}}^{(N)}\cup \{z \in \C\setminus \sigma_{\mathrm{cluster}}^{(N)}\mid 1 \in \sigma(I(z))\}.
    $$
    Indeed, by \cref{functional_equation} we see that
    $$
    (1-W_{\theta}^{-1}H^{(N)} W_{\theta})^{-1} = \left(1- W_{\theta}^{-1} I(z) W_{\theta}\right) ^{-1} (W_{\theta}^{-1} D(z)W_{\theta})
    $$
    if the terms are defined. By b). We know that $W_{\theta}^{-1} D(z)W_{\theta}$ and $W_{\theta}^{-1} I(z) W_{\theta}$ have a bounded extension for $z \notin \sigma_{\mathrm{cluster}}^{(N)}$. So it rests us to prove that $1 \in \sigma(W_{\theta}^{-1} I(z) W_{\theta}) )$ if and only if $1 \in \sigma(I(z))$. But this directly follows from \cref{compact_semisimilar}.\\
    It remains to prove the following: if $z_0 \in \sigma_{\mathrm{cluster}}^{(N)} \cup \{z \in \C\setminus \sigma_{\mathrm{cluster}}^{(N)}\mid 1 \in \sigma(I(z))\}$, and $G(z_0)$ is defined. We prove that $W_{\theta}^{-1} G(z_0) W_{\theta}$ is also well defined. Indeed, since we know that $z_0$ is separated from $ \{\sigma_{\mathrm{cluster}}^{(N)}\cup \{z \in \C\setminus \sigma_{\mathrm{cluster}}^{(N)}\mid 1 \in \sigma(I(z))\}\} \setminus \{z_0\}$, say with distance $d$. Hence we can find a path $\Gamma$ enclosing only $z_0$. Then the Laurent series 
    $$
    \sum_{k = -\infty}^\infty A_k(\theta)  (z-z_0)^k
    $$
    converges in norm on compacts for 
    $$
    0< |z-z_0| < d
    $$
    to $W^{-1}_{\theta} G(z) W_{\theta}$, with 
    $$
    A_k(\theta) = \int_{\Gamma} \frac{W^{-1}_{\theta}G(z) W_{\theta}}{(z-z_0)^{k+1}}dz.
    $$
    We specifically are interested in how $A_k(\theta)$ changes with $\theta$.\\
    Notice that for $z \in \C \setminus \sigma_{\mathrm{cluster}}^{(N)}$ we have that $W_\theta^{-1}G(z) W_\theta$ is holomorphic  for $\operatorname{Re}(\theta) > 0$.  We postpone the proof of this fact to keep the flow of the proof. \\
    Define for $\operatorname{Re}(\theta) < 0$ then $W_{\theta}^{-1}G(z) W_{\theta} \coloneq  (W_{-\theta}^{-1}G(z)^* W_{-\theta})^*$. With this definition, $W_{\theta}^{-1}G(z) W_{\theta}$ is a holomorphic function away from the imaginary axis, with a continuous extension to the imaginary axis. By the symmetry principle for holomorphic functions, we get that it is holomorphic everywhere.\\
    Using this we get that $A_k(\theta)$ is defined and analytic for every $\theta\in \C$. Furthermore, for $k <0$ and $\theta$ such that $||V - W_\theta^{-1}VW_\theta|| < d$, we know that $A_k(\theta) = 0$ by the von Neumann series. Hence $A_k(\theta)$ is an analytic function which is zero inside some open set, hence it is zero everywhere. Hence $W_{\theta}^{-1}G(z) W_{\theta}$ is well defined, meaning that $z_0 \notin \sigma(W^{-1}_\theta H^{(N)}W_\theta)$.
\end{proof}

\begin{proof}[Proof of the fact that $W_\theta^{-1}G(z)W_{\theta}$ is holomorphic in $\theta$]
Fix $z$ as before, let $\theta \in \mathbb{C}$ with $\text{Re}(\theta)>0$. For ease of notation, we define for any $h \in \mathbb{C}$ (sufficiently small in norm):  
\begin{equation} \label{eq: def Delta}
    \tilde{\Delta}(h):= W_{\theta+h}^{-1} VW_{\theta+h} -W_{\theta}^{-1}VW_{\theta}.
\end{equation}
Recall that Lemma \ref{lem: holom 1} tells us that $\lim_{h \to 0}\tilde{\Delta}(h) = 0$ and $\lim_{h \to 0}\frac{\tilde{\Delta}(h)}{h} = W_{\theta}^{-1} [|m_1|,V] W_\theta$. \\
As before, observe that on the domain of $W_{\theta}$ we have: 
    $W_{\theta}^{-1}G(z)W_{\theta}= (z-W_{\theta}^{-1}HW_{\theta})^{-1}=(z-H_0-W_{\theta}^{-1}VW_{\theta})^{-1}=:G_{\theta}$. 
We first prove that $G_\theta$ is continuous in $\theta$, then we prove that it is holomorphic. For continuity, by the von Neumann series, we may write for $|h|$ sufficiently small, and Item 2 of Lemma \ref{lem: holom 1} that 
$G_{\theta + h}  = G_\theta \sum_{n=0}^\infty (\tilde{\Delta}(h) G_\theta)^n $. Hence $\lim_{h\to 0} G_{\theta+h} = G_\theta$, since 
$$
\| G_{\theta+ h} - G_\theta\| \leq \frac{ \|\tilde \Delta(h)\|\|G_\theta\|}{1- \|G_\theta \tilde{\Delta}(h)|} \xrightarrow[{|h| \to 0}]{}  0.
$$
We now finally prove that $G_\theta$ is holomorphic. Note that
\begin{align*}
    \frac{ G_{\theta+h}-G_{\theta}}{h} &=  G_{\theta+h} \frac{\tilde{\Delta}(h)}{h }G_\theta \\
    & \xrightarrow[|h|\to 0]{} G_\theta ( W_{\theta}^{-1}[|m_1|,V] W_{\theta}) G_{\theta}  
 \end{align*}
where in the second line we used  the resolvent identity to rewrite $G_{\theta+h}-G_{\theta}$. Finally the limit as $|h| \to 0$ is given thanks to Item 4 of Lemma \ref{lem: holom 1}, together with the fact that multiplication is continuous in the operator norm. 
\end{proof}
\begin{theorem}\label{bounded_extention_for_projection}
    Let $\mathcal{H}$ be a Hilbert space. Let $W: D(W) \to \mathcal{H}$ be an unbounded self-adjoint bijective operator with dense domain $D(W)$. Then we know that $W^{-1}$ exists and is bounded. Let $P$ be a projection, and assume that $W^{-1}P W$ has a bounded closure. Then $\operatorname{range}(P) \cap D(W) $ is dense as a subset of $\operatorname{range}(P)$.
\end{theorem}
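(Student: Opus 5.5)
Let $B$ denote the bounded closure of $W^{-1}PW$. The plan is to show that $B$ maps $\mathcal H$ into $\operatorname{range}(P)$ and has dense range there, and then to use the density of $D(W)$ to push vectors into $\operatorname{range}(P)\cap D(W)$. The key identity is that, for $\xi\in D(W)$,
\[
W B W^{-1}\eta = P\eta \qquad \text{for } \eta = W\xi ,
\]
together with the intertwining relation $W^{-1}P = BW^{-1}$. This relation holds on the range of $W$, which is all of $\mathcal H$ since $W$ is bijective, so it holds on all of $\mathcal H$ as an identity of bounded operators. Consequently, for every $\eta\in\mathcal H$ we get $BW^{-1}\eta = W^{-1}P\eta$. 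In particular $B$ maps the dense set $\operatorname{range}(W^{-1}) = D(W)$ into $W^{-1}\operatorname{range}(P)$.

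The main step is to obtain vectors in $\operatorname{range}(P)\cap D(W)$. I would show that $B$ is itself an (oblique) projection whose range lies in $\operatorname{range}(P)\cap D(W)$ up to closure, and I would try the following concrete route.

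\begin{enumerate}
\item First show $B^2=B$. On $D(W)$ we have $W^{-1}PWW^{-1}PW=W^{-1}PW$, and since $D(W)$ is dense and $B$ is bounded, the identity extends.
\item Next show $PB=B$, i.e.\ $\operatorname{range}(B)\subset\operatorname{range}(P)$. Here I would use the adjoint: $W$ is self-adjoint, so $(W^{-1}PW)^*\supset WPW^{-1}$, and $B^*$ is the bounded closure of $WPW^{-1}$ restricted to $\{\eta: PW^{-1}\eta\in D(W)\}$. This is the step I expect to be the main obstacle, because it needs the domain of $WPW^{-1}$ to be dense.
\end{enumerate}

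If that adjoint route is awkward, I would fall back on taking $\phi\in\operatorname{range}(P)$ together with $\xi_n\in D(W)$, $\xi_n\to W\phi$; this is impossible in general since $W\phi$ need not exist, which is exactly why I would instead approximate via $W^{-1}$. Concretely, for $\phi\in\operatorname{range}(P)$ I would pick $\eta_n\in D(W)$ with $W^{-1}\eta_n$ approximating $\phi$ (this is possible because $W^{-1}$ has dense range) and set $\phi_n = W^{-1}PW(W^{-1}\eta_n) = W^{-1}P\eta_n$. Then $\phi_n\in D(W)$, and $W\phi_n = P\eta_n\in\operatorname{range}(P)$.

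Density should then follow by showing that $W^{-1}P\eta$, for $\eta$ in a dense set, approximates $\operatorname{range}(P)$ while staying in $\operatorname{range}(P)$. Since $P$ and $W^{-1}$ need not commute, membership in $\operatorname{range}(P)$ is the crux. The plan is to use $B$: the vectors $B\zeta$ with $\zeta\in D(W)$ equal $W^{-1}PW\zeta$, which lie in $D(W)$, and $PB=B$ places them in $\operatorname{range}(P)$. Finally I would check that $\{B\zeta\}$ is dense in $\operatorname{range}(P)$, via $B\phi=\phi$ for $\phi\in\operatorname{range}(P)\cap D(W)$ and continuity of $B$ combined with the density of $D(W)$.
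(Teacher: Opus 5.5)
Your step 2, $PB=B$, is false in general, and everything after it depends on it. On $D(W)$ you have $B\xi=W^{-1}PW\xi$, so the range of $B$ is the closure of $W^{-1}\operatorname{range}(P)$, not a subspace of $\operatorname{range}(P)$. In other words, $B$ is an oblique projection onto $\overline{W^{-1}\operatorname{range}(P)}$.

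Here is a counterexample. On $\mathcal H=\mathbb{C}^2\oplus\ell^2(\mathbb{N})$ take $W=\mathrm{diag}(1,2)\oplus\mathrm{diag}(1,2,3,\dots)$, and let $P$ be the orthogonal projection onto the span of $\phi=(1,1)\oplus 0$. Then $W^{-1}PW$ is bounded (it has rank one), but $B\phi=(3/2,3/4)\oplus 0$. This vector is not in $\operatorname{range}(P)$ and is not equal to $\phi$.

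So both $PB=B$ and the identity $B\phi=\phi$ for $\phi\in\operatorname{range}(P)\cap D(W)$, which you use in your last paragraph, fail. That last step is also circular, since it presupposes that $\operatorname{range}(P)\cap D(W)$ is dense in $\operatorname{range}(P)$. Your fallback vectors $W^{-1}P\eta_n$ have the same defect: they lie in $D(W)\cap W^{-1}\operatorname{range}(P)$, not in $\operatorname{range}(P)$.

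The missing idea is that the hypothesis forces $P$ to leave $D(W)$ invariant. This follows from the adjoint computation you began, but you aimed it at the wrong target. For $y\in\mathcal H$ and $\xi\in D(W)$, using $P=P^*$ and $(W^{-1})^*=W^{-1}$, one has $\langle W\xi,PW^{-1}y\rangle=\langle W^{-1}PW\xi,y\rangle=\langle B\xi,y\rangle$. Hence $|\langle W\xi,PW^{-1}y\rangle|\le\|B\|\,\|y\|\,\|\xi\|$, and so $PW^{-1}y\in D(W^*)=D(W)$. In particular, the domain of $WPW^{-1}$ that worried you is all of $\mathcal H$.

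Since $\operatorname{range}(W^{-1})=D(W)$, this says $P\,D(W)\subset D(W)$, and therefore $P\,D(W)\subset\operatorname{range}(P)\cap D(W)$. Finally, $P\,D(W)$ is dense in $\operatorname{range}(P)=P\mathcal H$, because $D(W)$ is dense and $P$ is continuous. This is the paper's proof.

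Note that the argument genuinely needs $P$ to be orthogonal, as $P_\lambda$ is in the application. For an oblique rank-one $P=u\langle v,\cdot\rangle$ with $v\in D(W)$, $u\notin D(W)$ and $\langle v,u\rangle=1$, the operator $W^{-1}PW$ is bounded, yet $\operatorname{range}(P)\cap D(W)=\{0\}$.
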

\begin{proof}
    Let $B$ be the bounded extension of $W^{-1}P W$. Fix $y \in H$. Note that 
$$
\sup_{x \in D(W)} |\langle  PW^{-1}y,  Wx\rangle| \leq \|By\| \|x\|
$$
In particular that $PW^{-1} y \in D(W^*) = D(W)$. This is true for any $y \in H$. Since we know that $\operatorname{range}(W^{-1}) = D(W)$, we get that $PD(W) \subset D(W)$.\\
Now since $P$ is a continuous map, with closed range, we get that $PD(W) \subset P \mathcal{H}$ is a dense set. Hence $P D(W) \subset \operatorname{range}(P) \cap D(W)$, and we are done. 
\end{proof}


\begin{theorem}\label{theorem}
    Let $\lambda$ be an isolated eigenvalue of $H^{(N)}$, and denote with $P_\lambda$ the spectral projection. Fix $1 \leq i \leq N$, and let $\theta \in \C$ with $\operatorname{Re}(\theta)\geq 0$. Then  $W_\theta^{-1} P_\lambda W_\theta$ has bounded closure.
\end{theorem}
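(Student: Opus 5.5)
The natural route is the Riesz projection formula. Since $\lambda$ is an isolated eigenvalue of the self-adjoint operator $H^{(N)}$, I would choose a small circle $\Gamma$ around $\lambda$ that encloses no other point of $\sigma(H^{(N)})$, and write
\[
P_\lambda = \frac{1}{2\pi i}\oint_\Gamma G(z)\,dz ,
\]
with the integral converging in operator norm. By Theorem \ref{lem:technical W}, item 4, for every $z\in\Gamma\subset \C\setminus\sigma(H^{(N)})$ the operator $W_\theta^{-1}G(z)W_\theta$ has a bounded extension. That extension is $G_\theta(z):=(z-H_0-\overline{W_\theta^{-1}VW_\theta})^{-1}$, as in the remark before Theorem \ref{lem:technical W}. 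The candidate bounded closure of $W_\theta^{-1}P_\lambda W_\theta$ is therefore
\[
B := \frac{1}{2\pi i}\oint_\Gamma G_\theta(z)\,dz .
\]

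Two steps then remain. The first is that $z\mapsto G_\theta(z)$ is norm-continuous (indeed holomorphic) on $\Gamma$, with $\sup_{z\in\Gamma}\|G_\theta(z)\|<\infty$, so that $B$ is a well-defined bounded operator. Since $G_\theta(z)$ is the resolvent of the closed operator $H_0+\overline{W_\theta^{-1}VW_\theta}$ and $\Gamma$ lies in its resolvent set, holomorphy in $z$ is standard. The resolvent set is open, so the resolvent is analytic there, and compactness of $\Gamma$ gives the uniform bound.

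The second step is that $B$ agrees with $W_\theta^{-1}P_\lambda W_\theta$ on $D(W_\theta)$. For $\xi\in D(W_\theta)$ we have $W_\theta\xi\in\mathcal H$, and
\[
P_\lambda W_\theta\xi=\frac{1}{2\pi i}\oint_\Gamma G(z)W_\theta\xi\,dz .
\]
I would then show that this vector lies in $D(W_\theta^{-1})$ (which is all of $\mathcal H$ for real $\theta\ge0$, since $W_\theta^{-1}$ is bounded) and that $W_\theta^{-1}$ can be passed inside the integral. Because $W_\theta^{-1}$ is bounded for $\operatorname{Re}\theta\ge 0$, it commutes with the Bochner integral. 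Pointwise in $z$ we have $W_\theta^{-1}G(z)W_\theta\xi=G_\theta(z)\xi$ by item 4, which gives $W_\theta^{-1}P_\lambda W_\theta\xi=B\xi$.

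The hard part is the bookkeeping at the level of domains. The key point is that $W_\theta^{-1}$ is bounded, so no closure issue arises when pulling it through the integral. For complex $\theta$, $W_\theta=W_{\operatorname{Re}\theta}U$ with $U$ unitary and commuting with $W_{\operatorname{Re}\theta}$, so this case reduces to real $\theta$. The case $\theta=0$ is trivial. A density argument then finishes the proof: $D(W_\theta)$ is dense and $B$ is bounded, so $B$ is the closure of $W_\theta^{-1}P_\lambda W_\theta$.
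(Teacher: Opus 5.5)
Your proposal is correct and follows the paper's own argument. You integrate the conjugated resolvent $(z-W_\theta^{-1}H^{(N)}W_\theta)^{-1}$, which exists on $\Gamma$ by item 4 of Theorem \ref{lem:technical W}, over a small contour around $\lambda$ to get a bounded $B$, and then check that $B\xi=W_\theta^{-1}P_\lambda W_\theta\xi$ for $\xi\in D(W_\theta)$. Your extra care, namely the $\frac{1}{2\pi i}$ normalization, norm-continuity of the integrand in $z$, and passing the bounded $W_\theta^{-1}$ through the integral, makes explicit what the paper leaves implicit.
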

\begin{proof}
    We have proven that $\sigma(W_\theta^{-1} H^{(N)} W_\theta) = \sigma(H^{(N)})$ thanks to Lemma \ref{lem:technical W} Item 4. In particular, $\lambda$ is also an isolated point of $\sigma(W_\theta^{-1} H^{(N)} W_\theta)
    $. Hence there exists a closed curve $\Gamma$ enclosing $\lambda$, and only $\lambda$. Define
    $$
    B = \int_{\Gamma} (z - W_\theta^{-1} H^{(N)} W_\theta)^{-1} dz
    $$ 
    which is bounded since $\Gamma$ is compact. For $\xi \in D(W_\theta)$ we get that 
    $$
   B\xi  = \int_{\Gamma} (z - W_\theta^{-1} H^{(N)} W_\theta)^{-1} \xi dz  = \int_{\Gamma} W_\theta^{-1}(z -  H^{(N)})^{-1} W_\theta\xi dz  = W_\theta^{-1} P_\lambda W_\theta \xi
    $$
    hence $B$ is a bounded extension of $W_\theta^{-1} P_\lambda W_\theta$.
\end{proof}

\begin{proof}[Proof of Theorem \ref{thm: superexp loc}]
First, we observe that: $P_\lambda$  is finite dimensional. Similar to \cite{HunzikerWalter1966Otso}, notice that from \cref{functional_equation} we have that 
$$
P_\lambda = \lim_{z \to \lambda}(z-\lambda) G(z) = \lim_{z \to \lambda}((z-\lambda)(D(z)+I(z)  G(z))) =  I(\lambda) \lim_{z \to \lambda}(z-\lambda) G(z) = I(\lambda)P_\lambda ,
$$ 
where we used the spectral theorem in the first equality, then we used theorem \ref{Domain_of_I,D} to treat $D(\lambda), I(\lambda)$ and the fact that $\lambda \notin \sigma_{\text{cluster}}^{(N)}$. Finally, we recall the fact that the product of compact and bounded operators is compact and
 $I(\lambda)$ is compact by \cref{I(z)_is_compact}. So we get that $P_\lambda$ is a compact projection, hence finite dimensional.\\
By \cref{theorem} and \cref{bounded_extention_for_projection}, we know that $D(W_\theta)$ is dense in $\text{range}(P_{\lambda})$, being a finite dimensional space this means $\psi_\lambda\in D(W_\theta)$ for any $\theta \in \R$ and $1 \leq i\leq N$. Hence we conclude that for any $\underline{m} \in \mathbb{Z}^{N}$: $$|\langle \psi_{\lambda}\mid \underline{m}\rangle |\leq C(\theta,\lambda) e^{-\theta |m_i|}.$$ Hence $$|\langle \psi_{\lambda}\mid\underline{m}\rangle |^N<  C(\theta,\lambda)  e^{-\theta (|m_1| + |m_2| + \dots + |m_N|)}.$$ Since $\theta$ was arbitrary, and by replacing $\theta'= N\theta$, we conclude the desired decay for the Stark basis. To get the  bound in the position basis, it is sufficient to recall that 
$\langle m_i \mid e_{j_i}\rangle=\mathcal{J}_{m_i-j_i}$. The decay stated in Theorem \ref{thm: superexp loc} can be deduced by a simple computation from the above bound and bounds \eqref{eq: Bessel UB}, and \eqref{eq: Bessel Sum} from Lemma \ref{lem: simple bounds}.
\end{proof}
\appendix

\bibliographystyle{plainnat}
\bibliography{references}
\end{document}